\newcommand{\xmark}{\ding{55}}
\newcommand{\tick}{\checkmark}
\newcommand{\todo}[1]{{\bf TODO:} #1}
\algrenewcommand\algorithmicrequire{\textbf{Precondition:}}
\algrenewcommand\algorithmicensure{\textbf{Postcondition:}}
\begin{document}

%\special{papersize=8.5in,11in}
\setlength{\pdfpageheight}{\paperheight}
\setlength{\pdfpagewidth}{\paperwidth}

%% \conferenceinfo{POPL '15}{Month d--d, 20yy, City, ST, Country} 
%% \copyrightyear{2015} 

% Uncomment one of the following two, if you are not going for the 
% traditional copyright transfer agreement.

%\exclusivelicense                % ACM gets exclusive license to publish, 
                                  % you retain copyright

%\permissiontopublish             % ACM gets nonexclusive license to publish
                                  % (paid open-access papers, 
                                  % short abstracts)

\title{Unrestricted Termination and Non-Termination Arguments for Bit-Vector Programs}

\author{Cristina David\and Daniel Kroening\and Matt Lewis}
\institute{University of Oxford}

%\authorinfo{Cristina David\and Daniel Kroening\and Matt Lewis}
%           {University of Oxford}
%           {firstname.lastname@cs.ox.ac.uk}

%\authorinfo{}

\maketitle

\begin{abstract}
Proving program termination is typically done by finding a well-founded
ranking function for the program states.  Existing termination
provers typically find ranking functions using either linear algebra or
templates.  As such they are often restricted to finding linear ranking
functions over mathematical integers.  This class of functions is
insufficient for proving termination of many terminating programs, and
furthermore a termination argument for a program operating on mathematical
integers does not always lead to a termination argument for the same program
operating on fixed-width machine integers.
We propose a termination analysis able 
to generate nonlinear, lexicographic ranking functions and
nonlinear recurrence sets that are correct for fixed-width machine arithmetic
and floating-point arithmetic
Our technique is based on a reduction from program \emph{termination} to 
second-order \emph{satisfaction}. We provide 
formulations for termination and non-termination in a fragment of second-order logic 
with restricted quantification which is decidable over finite domains \cite{kalashnikov}.
%existential second-order logic over finite domains is decidable \cite{kalashnikov}. 
The resulted technique is a sound and complete analysis for the termination 
of finite-state programs with fixed-width
integers and IEEE floating-point arithmetic.
\end{abstract}

%\category{CR-number}{subcategory}{third-level}

\keywords
Termination, Non-Termination, 
Lexicographic Ranking Functions, Bit-vector Ranking Functions, Floating-Point Ranking Functions.

\section{Introduction}\label{sec:intro}

The halting problem has been of central interest to computer scientists
since it was first considered by Turing in 1936~\cite{turing}.  Informally,
the halting problem is concerned with answering the question
``does this program run forever, or will it eventually terminate?''

Proving program termination is typically done by finding a \emph{ranking
function} for the program states, i.e.~a monotone map from the program's
state space to a well-ordered set.  Historically, the search for ranking
functions has been constrained in various syntactic ways, leading to
incompleteness, and is performed over abstractions that do not soundly
capture the behaviour of physical computers.  In this paper, we present a
sound and complete method for deciding whether a program with a fixed amount of
storage terminates.  Since such programs are necessarily finite state, our
problem is much easier than Turing's, but is a better fit for analysing computer
programs.

When surveying the area of program termination chronologically, we observe
an initial focus on monolithic approaches based on a single measure shown to
decrease over all program
paths~\cite{DBLP:conf/vmcai/P04,DBLP:conf/cav/BradleyMS05}, followed by more
recent techniques that use termination arguments based on Ramsey's
theorem~\cite{DBLP:conf/lpe/CodishG03,DBLP:conf/lics/PodelskiR04,DBLP:conf/pldi/CookPR06}.
The latter proof style builds an argument that a transition relation is disjunctively well founded
by composing several small well-foundedness arguments.
The main benefit of this approach is
the simplicity of local termination measures in contrast to global ones. 
For instance, there are cases in which linear arithmetic suffices when using
local measures, while corresponding global measures require nonlinear
functions or lexicographic orders.

One drawback of the Ramsey-based approach is that the validity of the
termination argument relies on checking the \emph{transitive closure} of the
program, rather than a single step.  As such, there is experimental evidence
that most of the effort is spent in reachability
analysis~\cite{DBLP:conf/pldi/CookPR06,DBLP:conf/cav/KroeningSTW10},
requiring the support of powerful safety checkers: there is a trade-off
between the complexity of the termination arguments and that of checking
their validity.

As Ramsey-based approaches are limited by the state of the art in safety
checking, recent research shifts back to more complex termination arguments
that are easier to
check~\cite{DBLP:conf/cav/KroeningSTW10,DBLP:conf/tacas/CookSZ13}. 
Following the same trend, we investigate its extreme: \emph{unrestricted}
termination arguments.  This means that our ranking functions may involve
nonlinearity and lexicographic orders: we do not commit to any particular
syntactic form, and do not use templates.  Furthermore, our approach allows
us to \emph{simultaneously} search for proofs of \emph{non-termination},
which take the form of recurrence sets.

Figure~\ref{fig:handletable} summarises the related work with respect to the
restrictions they impose on the transition relations as well as the form of
the ranking functions computed.  While it supports the observation that the
majority of existing termination analyses are designed for linear programs
and linear ranking functions, it also highlights another simplifying
assumption made by most state-of-the-art termination provers: that
bit-vector semantics and integer semantics give rise to the same termination
behaviour.  Thus, most existing techniques treat fixed-width machine integers
(bit-vectors) and IEEE floats as mathematical integers and reals,
respectively~\cite{DBLP:conf/pldi/CookPR06,DBLP:conf/popl/Ben-AmramG13,DBLP:conf/vmcai/P04,DBLP:conf/atva/HeizmannHLP13,DBLP:conf/vmcai/BradleyMS05,DBLP:conf/cav/KroeningSTW10}.

By assuming bit-vector semantics to be identical to integer semantics, these
techniques ignore the wrap-around behaviour caused by overflows, which can
be unsound.  In Section~\ref{sec:motivation}, we show that integers and
bit-vectors exhibit incomparable behaviours with respect to termination,
i.e.~programs that terminate for integers need \emph{not} terminate for
bit-vectors and vice versa.  Thus, abstracting bit-vectors with integers may
give rise to {\em unsound} and {\em incomplete} analyses.

% we propose a general framework that uniformly computes lexicographic,
% nonlinear ranking functions supported by nonlinear inductive invariants
% for loops with arbitrary guards and transitions over bit-vectors and floats. 
% it also computes nonlinear recurrence sets, which comprise non-termination
% proofs.  our encoding treats multiple loops and nested loops uniformly,
% without having to enumerate lassos.

We present a technique that treats linear and nonlinear programs uniformly and
it is not restricted to finding linear ranking functions, but can
also compute lexicographic nonlinear ones. 
Our approach is constraint-based and relies on second-order formulations 
of termination and non-termination.
The obvious issue is that,
due to its expressiveness, second-order logic is very difficult to reason
in, with many second-order theories becoming undecidable even when the
corresponding first-order theory is decidable.
To make solving our constraints tractable, we formulate termination and non-termination inside a fragment of second-order logic with restricted quantification, 
for which we have built a solver in \cite{kalashnikov}.
Our method is sound and
complete for bit-vector programs -- for any program, we find
a proof of either its termination or non-termination.

% problems as
%second-order satisfaction.  We identify a fragment of second-order
%logic that is expressive enough to capture both problems, while being
%restrictive enough to admit effective solvers.  Subsequently, we propose a method for
%checking the satisfiability of a formula in this fragment by
%solving an isomorphic program synthesis problem
%(Section~\ref{sec:synthesis}).  

% With respect to the performance of our technique, we show that its runtime is dominated by the size of 
% the shortest termination proof.
% 
% This means that our procedure is not directly dependent on the structure of
% the analysed loop (e.g.~the number of variables or the number of lassos),
% but on the \emph{Kolmogorov complexity of its minimal termination argument}. 
% We argue that the length of a program's termination proof is indicative of
% how easy to understand the program is, and thus programmers tend to write
% programs that have relatively short termination proofs.  We empirically show
% that this claim holds in practice, causing our technique to perform well.

\begin{figure*}
\centering
 \begin{tabular}{|ll||c|c|c|c|c|c|c|c|}
 \hline
  & & \multicolumn{8}{c|}{Program} \\
  & & \multicolumn{2}{c|}{Rationals/Integers} & \multicolumn{2}{c|}{Reals} & \multicolumn{2}{c|}{Bit-vectors} & \multicolumn{2}{c|}{Floats} \\
  & & L & NL & L & NL & L & NL & L & NL \\
  \hline
  \hline
  \multirow{4}{*}{Ranking} & Linear lexicographic &  \cite{DBLP:conf/popl/Ben-AmramG13,DBLP:conf/cav/BradleyMS05,DBLP:conf/tacas/CookSZ13,DBLP:conf/vmcai/P04} & - & \cite{DBLP:conf/tacas/LeikeH14} & - &\checkmark&\checkmark&\checkmark&\checkmark\\
   & Linear non-lexicographic & \cite{DBLP:conf/pldi/CookPR06,DBLP:conf/cav/LeeWY12,DBLP:conf/atva/HeizmannHLP13,DBLP:conf/vmcai/BradleyMS05,DBLP:conf/cav/KroeningSTW10} & \cite{DBLP:conf/vmcai/BradleyMS05} & \cite{DBLP:conf/tacas/LeikeH14} & - & \checkmark~ \cite{DBLP:conf/tacas/CookKRW10} &\checkmark~ \cite{DBLP:conf/tacas/CookKRW10}&\checkmark&\checkmark\\
   & Nonlinear lexicographic & - & - & - & - &\checkmark&\checkmark&\checkmark&\checkmark\\
   & Nonlinear non-lexicographic & \cite{DBLP:conf/vmcai/BradleyMS05} &  \cite{DBLP:conf/vmcai/BradleyMS05} & - & - &\checkmark&\checkmark&\checkmark&\checkmark\\
   \hline
 \end{tabular}

 \caption{Summary of related termination analyses. Legend: \checkmark = we can handle; - = no available works; L = linear; NL = nonlinear.} \label{fig:handletable}
\end{figure*}

The main contributions of our work can be summarised as follows:
\begin{itemize}

\item  We rephrased the termination and non-termination problems as
second-order %$\exists \forall$ 
satisfaction problems.  This formulation
captures the (non-)termination
properties of all of the loops in the program, including
nested loops.  We can use this to analyse all the loops at once,
or one at a time.  Our treatment handles termination and non-termination uniformly:
both properties are captured in the same second-order formula.	

\item We designed a bit-level accurate technique for computing ranking
functions and recurrence sets that correctly accounts for the wrap-around
behaviour caused by under- and overflows in bit-vector and floating-point arithmetic.  Our
technique is not restricted to finding linear ranking functions, but can
also compute lexicographic nonlinear ones.

%% \item We used program synthesis techniques to solve the second-order satisfaction
%% problems generated by our encoding.  Our solver requires only a single call to a SAT solver to verify its
%% proof.  It is not based on enumerating lassos, nor does it require a safety prover
%% that can reason about transitive closure.

%% \item Our solver is based on a series of source-to-source
%% transformations.  As well as leading to a concise implementation, this
%% allows us to leverage existing, mature verification tools.

\item We implemented our technique and tried it on a selection of programs
handling both bit-vectors and floats. In our implementation we made use of 
a solver for a fragment of second-order logic with restricted quantification that is decidable 
over finite domains \cite{kalashnikov}. 

\end{itemize} 

{\bf Limitations.} Our algorithm proves termination for transition systems
with finite state spaces.  The (non-)termination proofs take the form of
ranking functions and program invariants that are expressed in a
quantifier-free language.  This formalism is powerful enough to handle a
large fragment of C, but is not rich enough to analyse code that uses
unbounded arrays or the heap.  Similar to other termination
analyses~\cite{DBLP:conf/tacas/CookSZ13}, we could attempt to alleviate the
latter limitation by abstracting programs with heap to arithmetic
ones~\cite{DBLP:conf/popl/MagillTLT10}.  Also, we have not yet added support
for recursion or \texttt{goto} to our encoding.

\section{Motivating Examples} \label{sec:motivation}

Figure~\ref{fig:handletable} illustrates the most common simplifying
assumptions made by existing termination analyses:
\begin{itemize}
\item[(i)] programs use only linear arithmetic.
\item[(ii)] terminating programs have termination arguments expressible in linear arithmetic.
\item[(iii)] the semantics of bit-vectors and mathematical integers are equivalent.
\item[(iv)] the semantics of IEEE floating-point numbers and mathematical reals are equivalent.
\end{itemize}  

To show how these assumptions are violated by even simple programs, we draw
the reader's attention to the programs in Figure~\ref{fig:motivation} and
their curious properties:

\begin{itemize}

\item Program (a) breaks assumption (i) as it makes use of the bit-wise $\&$ operator.
Our technique finds that an admissible ranking function is the linear
function $R(x) = x$, whose value decreases with every iteration, but cannot
decrease indefinitely as it is bounded from below.  This example also
illustrates the lack of a direct correlation between the linearity of a
program and that of its termination arguments.

\item Program (b) breaks assumption (ii), in that it has no linear ranking
function.  We prove that this loop terminates by finding the nonlinear
ranking function $R(x) = |x|$.

\item Program (c) breaks assumption (iii).  This loop is terminating for
bit-vectors since $x$ will eventually overflow and become negative. 
Conversely, the same program is non-terminating using integer arithmetic
since $x > 0 \rightarrow x+1 > 0$ for any integer $x$.

\item Program (d) also breaks assumption (iii), but ``the other way'': it
terminates for integers but not for bit-vectors.  If each of the variables
is stored in an unsigned $k$-bit word, the following entry state will lead
to an infinite loop:
$$ M = 2^k - 1,\quad N = 2^k - 1,\quad i = M,\quad j = N-1 $$

\item Program (e) breaks assumption (iv): it terminates for reals but not
for floats.  If $x$ is sufficiently large, rounding error will cause the
subtraction to have no effect.

\item Program (f) breaks assumption (iv) ``the other way'': it terminates
for floats but not for reals.  Eventually $x$ will become sufficiently small
that the nearest representable number is $0.0$, at which point it will be
rounded to $0.0$ and the loop will terminate.

\end{itemize}

Up until this point, we considered examples that are not soundly treated by
existing techniques as they don't fit in the range of programs addressed by
these techniques.  Next, we look at some programs that are handled by
existing termination tools via dedicated analyses.  We show that our method
handles them uniformly, without the need for any special treatment.
\begin{itemize}

\item Program (g) is a linear program that is shown
in~\cite{DBLP:conf/tacas/CookSZ13} not to admit (without prior manipulation)
a lexicographic linear ranking function.  With our technique we can find the
nonlinear ranking function $R(x) = |x|$.

%As with linear lexicographic ranking functions, a state is mapped to a tuple of values such that the
%loop transition leads to a decrease with respect to the lexicographic
%ordering for this tuple. Therefore no function may increase unless a function of
%a lower index decreases. Additionally, at every step, there must be at least one
%function that decreases.

\item Program (h) illustrates conditional termination.  When proving program
termination we are simultaneously solving two problems: the search for a
termination argument, and the search for a supporting
invariant~\cite{DBLP:conf/cav/BrockschmidtCF13}.  For this loop, we find the
ranking function $R(x) = x$ together with the supporting invariant $y=1$.

\item In the terminology of \cite{DBLP:conf/tacas/LeikeH14}, program (i)
admits a \emph{multiphase} ranking function, computed from a multiphase
ranking template.  Multiphase ranking templates are targeted at programs
that go through a finite number of phases in their execution.  Each phase is
ranked with an affine-linear function and the phase is considered to be
completed once this function becomes non-positive.

In our setting this type of programs does not need special treatment, as we
can find a nonlinear lexicographic ranking function $R(x, y, z) = (x < y,
z)$.\footnote{This termination argument is somewhat subtle.  The Boolean
values $\mathit{false}$ and $\mathit{true}$ are interpreted as 0 and 1,
respectively.  The Boolean $x < y$ thus eventually decreases, that is to say
once a state with $x \geq y$ is reached, $x$ never again becomes greater
than $y$.  This means that as soon as the ``else'' branch of the if
statement is taken, it will continue to be taken in each subsequent
iteration of the loop.  Meanwhile, if $x < y$ has not decreased (i.e., we
have stayed in the same branch of the ``if''), then $z$ does decrease. 
Since a Boolean only has two possible values, it cannot decrease
indefinitely.  Since $z > 0$ is a conjunct of the loop guard, $z$ cannot
decrease indefinitely, and so $R$ proves that the loop is well founded.}

%\item Program (j) illustrates a nested loop.  This construction can
%cause difficulties for provers based on enumerating lassos

\end{itemize}
As with all of the termination proofs presented in this paper, the ranking
functions above were all found completely automatically.

\begin{figure*}[h]
\hspace*{-2.5cm}
\centering
%\resizebox{.95\textwidth}{!}{
%\begin{minipage}{\linewidth}
\begin{tabular}{ccc}
\begin{subfigure}[b]{0.45\textwidth}
\begin{lstlisting}
while (x > 0) {
  x = (x - 1) & x;
}
\end{lstlisting}
\caption{Taken from~\cite{DBLP:conf/tacas/CookKRW10}.}
 \label{fig:motivation.a}
\end{subfigure}%

&

\begin{subfigure}[b]{0.45\textwidth}
\begin{lstlisting}
while (x != 0) {
  x = -x / 2;
}
\end{lstlisting}
\caption{}
 \label{fig:motivation.b}
\end{subfigure}%

&

\begin{subfigure}[b]{0.45\textwidth}
\begin{lstlisting}[language=C]
while(x > 0) {
  x++;
}
 \end{lstlisting}
\caption{}
 \label{fig:motivation.c}
\end{subfigure} \\

\hline

\begin{subfigure}[b]{0.45\textwidth}
\begin{lstlisting}
while (i<M || j<N) {
  i = i + 1;
  j = j + 1;
}
\end{lstlisting}
\caption{Taken from~\cite{DBLP:conf/sigsoft/Nori013}}
 \label{fig:motivation.d}
\end{subfigure} 

&

\begin{subfigure}[b]{0.45\textwidth}
\begin{lstlisting}
float x;

while (x > 0.0) {
  x -= 1.0;
}
\end{lstlisting}
\caption{}
 \label{fig:motivation.e}
\end{subfigure} 

&

\begin{subfigure}[b]{0.45\textwidth}
\begin{lstlisting}
float x;

while (x > 0.0) {
  x *= 0.5;
}
\end{lstlisting}
\caption{}
 \label{fig:motivation.f}
\end{subfigure} \\
\hline

\begin{subfigure}[b]{0.45\textwidth}
\begin{lstlisting}
while (x != 0) {
  if (x > 0)
    x--;
  else
    x++;
}
\end{lstlisting}
\caption{Taken from \cite{DBLP:conf/tacas/CookSZ13}}
 \label{fig:motivation.g}
\end{subfigure}

&

\begin{subfigure}[b]{0.45\textwidth}
\begin{lstlisting}
y = 1;

while (x > 0) {
  x = x - y;
}
\end{lstlisting}
\caption{}
 \label{fig:motivation.h}
\end{subfigure} 
&
\begin{subfigure}[b]{0.45\textwidth}
\begin{lstlisting}
while (x>0 && y>0 && z>0){
  if (y > x) {
    y = z;
    x = nondet();
    z = x - 1;
  } else {
    z = z - 1;
    x = nondet();
    y = x - 1;
  }
}
\end{lstlisting}
\caption{Taken from~\cite{BA:mcs}}
 \label{fig:motivation.i}
\end{subfigure} 
\end{tabular}
%}
\caption{Motivational examples, mostly taken from the literature.\label{fig:motivation}}
\end{figure*}

\section{Preliminaries}

Given a program, we first formalise its termination argument as a ranking
function (Section~\ref{sec:ranking.functions}).  Subsequently, we discuss
bit-vector semantics and illustrate differences between machine arithmetic
and integer arithmetic that show that the abstraction of bit-vectors to
mathematical integers is unsound (Section~\ref{sec:machine.arith}).

\subsection{Termination and Ranking Functions} \label{sec:ranking.functions}

A program $P$ is represented as a transition system with state space $X$ and
transition relation $T \subseteq X \times X$.  For a state
$x \in X$ with $T(x,x')$ we say $x'$ is a successor of $x$ under $T$.

\begin{definition}[Unconditional termination]
A program is said to be \emph{unconditionally terminating} if
there is no infinite sequence of states $x_1, x_2, \ldots \in X$ with
$\forall i.~T(x_i, x_{i+1})$.
\end{definition}

We can prove that the program is unconditionally terminating by
finding a ranking function for its transition relation.
\begin{definition}[Ranking function]
A function ${R:X\to Y}$ is a \emph{ranking function} for the
transition relation $T$ if $Y$ is a well-founded set with order $>$ and 
$R$ is injective and monotonically decreasing with respect to $T$.  That is
to say:
$$\forall x, x' \in X. T(x, x') \Rightarrow R(x) > R(x')$$
\end{definition}

\begin{definition}[Linear function]
A \emph{linear function} $f: X \to Y$ 
with $\dim(X) = n$ and $\dim(Y) = m$ is of the form: $$f(\vec{x}) = M\vec{x}$$ where
$M$ is an $n \times m$ matrix.
\end{definition}

In the case that $\dim(Y) = 1$, this reduces to the inner product
$$f(\vec{x}) = \vec{\lambda} \cdotp \vec{x} + c \;.$$

\begin{definition}[Lexicographic ranking function]
For $Y = Z^m$, we say that a ranking function $R: X \to Y$ is \emph{lexicographic}
if it maps each state in $X$ to a tuple of values such that the loop transition leads to a decrease with
respect to the lexicographic ordering for this tuple.
The total order imposed on $Y$ is the lexicographic ordering
induced on tuples of $Z$'s.  So for $y = (z_1, \ldots, z_m)$ and
$y' = (z'_1, \ldots, z'_m)$:
\[
 y > y' \iff \exists i \leq m . z_i > z'_i \wedge \forall j < i . z_j = z'_j
\]

\end{definition}

We note that some termination arguments require lexicographic ranking functions, or
alternatively, ranking functions whose co-domain is a countable ordinal, rather than just $\mathbb{N}$.

\subsection{Machine Arithmetic Vs.~Peano Arithmetic} \label{sec:machine.arith} 

Physical computers have bounded storage, which means they are unable to
perform calculations on mathematical integers.  %% For example, if $A$ is the
%% Ackermann function and $G$ is Graham's number, a physical computer capable
%% of computing $A(G, G)$ would contain (much!) more matter than is believed to
%% exist in the universe.  Fortunately, it is rare for a programmer to need
%% such a large number and so modern computers
They do their arithmetic over
fixed-width binary words, otherwise known as bit-vectors.  For the remainder
of this section, we will say that the bit-vectors we are working with are
$k$-bits wide, which means that each word can hold one of $2^k$ bit
patterns.  Typical values for $k$ are 32 and 64.

Machine words can be interpreted as ``signed'' or ``unsigned'' values. 
Signed values can be negative, while unsigned values cannot.  The encoding
for signed values is two's complement, where the most significant bit
$b_{k-1}$ of the word is a ``sign'' bit, whose weight is $-(2^k - 1)$ rather
than $2^k - 1$.  Two's complement representation has the property that
$\forall x .  -x = (\mathord{\sim} x) + 1$, where $\mathord{\sim}(\bullet)$
is bitwise negation.  Two's complement also has the property that addition,
multiplication and subtraction are defined identically for unsigned and
signed numbers.

Bit-vector arithmetic is performed modulo $2^k$, which is the source of many
of the differences between machine arithmetic and Peano
arithmetic\footnote{ISO C requires that unsigned arithmetic is performed
modulo $2^k$, whereas the overflow case is undefined for signed arithmetic. 
In practice, the undefined behaviour is implemented just as if the
arithmetic had been unsigned.}.  To give an example, $(2^k - 1) + 1 \equiv 0
\pmod {2^k}$ provides a counterexample to the statement $\forall x. 
x + 1 > x$, which is a theorem of Peano arithmetic but not of modular
arithmetic.  When an arithmetic operation has a result greater than $2^k$,
it is said to ``overflow''.  If an operation does not overflow, its
machine-arithmetic result is the same as the result of the same operation
performed on integers.

The final source of disagreement between integer arithmetic and bit-vector
arithmetic stems from width conversions.  Many programming languages allow
numeric variables of different types, which can be represented using words
of different widths.  In C, a \texttt{short} might occupy 16 bits, while an
\texttt{int} might occupy 32 bits.  When a $k$-bit variable is assigned to a
$j$-bit variable with $j < k$, the result is truncated $\mathrm{mod}~2^j$.  For
example, if $x$ is a 32-bit variable and $y$ is a 16-bit variable, $y$ will
hold the value $0$ after the following code is executed:
\begin{lstlisting}
x = 65536;
y = x;
\end{lstlisting}

% This gives us a counterexample to the statement $\forall x, y . x = y
% \Rightarrow (x + 1) = (y + 1)$, which is a theorem of Peano arithmetic.

As well as machine arithmetic differing from Peano arithmetic on the
operators they have in common, computers have several ``bitwise'' operations
that are not taken as primitive in the theory of integers.  These operations
include the Boolean operators \texttt{and, or, not, xor} applied to each
element of the bit-vector.  Computer programs often make use of these
operators, which are nonlinear when interpreted in the standard model of
Peano arithmetic\footnote{Some of these operators can be seen as
linear in a different algebraic structure, e.g.~\texttt{xor} corresponds to
addition in the Galois field $\mathrm{GF}(2^k)$.}.

\section{Termination as Second-Order Satisfaction} \label{sec:second.order}

The problem of program verification can be reduced to the problem of finding
solutions to a second-order
constraint~\cite{DBLP:conf/pldi/GrebenshchikovLPR12,DBLP:conf/pldi/GulwaniSV08}. 
Our intention is to apply this approach to termination analysis.  In this
section we show how several variations of both the termination and the
non-termination problem can be uniformly defined in second-order logic.

Due to its expressiveness, second-order logic is very difficult to reason
in, with many second-order theories becoming undecidable even when the
corresponding first-order theory is decidable.
In \cite{kalashnikov}, we have identified and built a solver for a fragment of second-order logic with restricted quantification, 
which we call second-order SAT (see Definition~\ref{def:2sat}).

\begin{definition}[Second-Order SAT]
\label{def:2sat}
 \[
  \exists S_1 \ldots S_m . Q_1 x_1 \ldots Q_n x_n . \sigma
 \]
 Where the $S_i$'s range over predicates,
the $Q_i$'s are either $\exists$ or $\forall$,
the $x_i$'s range over boolean values,
 and $\sigma$ is a quantifier-free propositional formula
 whose free variables are the $x_i$'s.  
Each $S_i$ has an associated arity $\mathrm{ar}(S_i)$
 and $S_i \subseteq \mathbb{B}^{\mathrm{ar}(S_i)}$.  Note that 
$Q_1 x_1 \ldots Q_n x_n . \sigma$
 is an instance of first-order propositional SAT, i.e. QBF.
\end{definition}

%% To make solving our constraints tractable, we identify a fragment of second-order logic
%% with a constrained use of quantification that is expressive enough to encode
%% both termination and non-termination of finite-state programs, while admitting a semi-decision procedure.
%% We will suggestively refer to the fragment as the \emph{synthesis fragment}:

%% \begin{definition}[Synthesis Fragment]
%% A formula is in the \emph{synthesis fragment} iff it is of the form
%% %
%%  \[
%%   \exists \vec{P},~ \vec{x} . ~\forall~ \vec{y} .\, \sigma(\vec{P}, \vec{x}, \vec{y})
%%  \]
%% %
%% where $\vec{P}$ ranges over functions, while $\vec{x}$ and $\vec{y}$ range over
%% ground terms.  Function $\sigma: (X^n \times Y^m \to Z^k) \times X^n \times
%% Y^m \to \mathbb{B}$ can be seen as a specification function that returns
%% true iff the functions $\vec{P}$ compute appropriate outputs when fed the
%% inputs $\vec{x}$ and $\vec{y}$.  The synthesis fragment is implicitly
%% defined over some background first-order theory, and $\sigma$ must be
%% computable in that theory.
%% %
%% \end{definition}
%
We note that by existentially quantifying over Skolem functions, formulae with arbitrary
first-order quantification can be brought into the synthesis fragment~\cite{hol-book}, so the
fragment is semantically less restrictive than it looks.
%% Checking satisfiability of a formula in second-order SAT corresponds to program synthesis and
%% amounts to finding witnesses for the unknowns $\vec{P}$ and $\vec{x}$ that meet the specification
%% for all $\vec{y}$. 
%% If a pair $(\vec{P}, \vec{x})$ is a solution to the synthesis problem, then we write $(\vec{P}, \vec{x}) \models \sigma$.
%% For the remainder of the presentation, we drop the vector notation and write $x$ for $\vec{x}$, with the understanding
%% that all quantified variables range over vectors.

In the rest of this section, we show that second-order SAT 
is expressive enough to encode both termination and non-termination. 
%% Following our formulation as second-order satisfaction, 
%% in Section~\ref{sec:synthesis} we present a solver for the synthesis fragment that models bit-accurate semantics. 

\subsection{An Isolated, Simple Loop}

We will begin our discussion by showing how to encode in second-order SAT the
\mbox{(non-)termination} of a program consisting of a single loop with no nesting.
For the time being, a loop $L(G, T)$ is defined by its guard $G$ and body $T$
such that states $x$ satisfying the loop's guard are given by the
predicate $G(x)$.  The body of the loop is encoded as the transition
relation $T(x, x')$, meaning that state $x'$ is reachable from state $x$ via
a single iteration of the loop body.  For example, the loop in
Figure~\ref{fig:motivation.a} is encoded as:
\begin{align*}
G(x) & = \{ x \mid x>0 \} \\
T(x,x') &= \{ \langle x, x' \rangle \mid x' = (x - 1) \, \& \, x \}
\end{align*}
We will abbreviate this with the notation:
\begin{align*}
G(x) & \triangleq x > 0 \\
T(x, x') & \triangleq x' = (x - 1) \, \& \, x
\end{align*}

\begin{figure*}
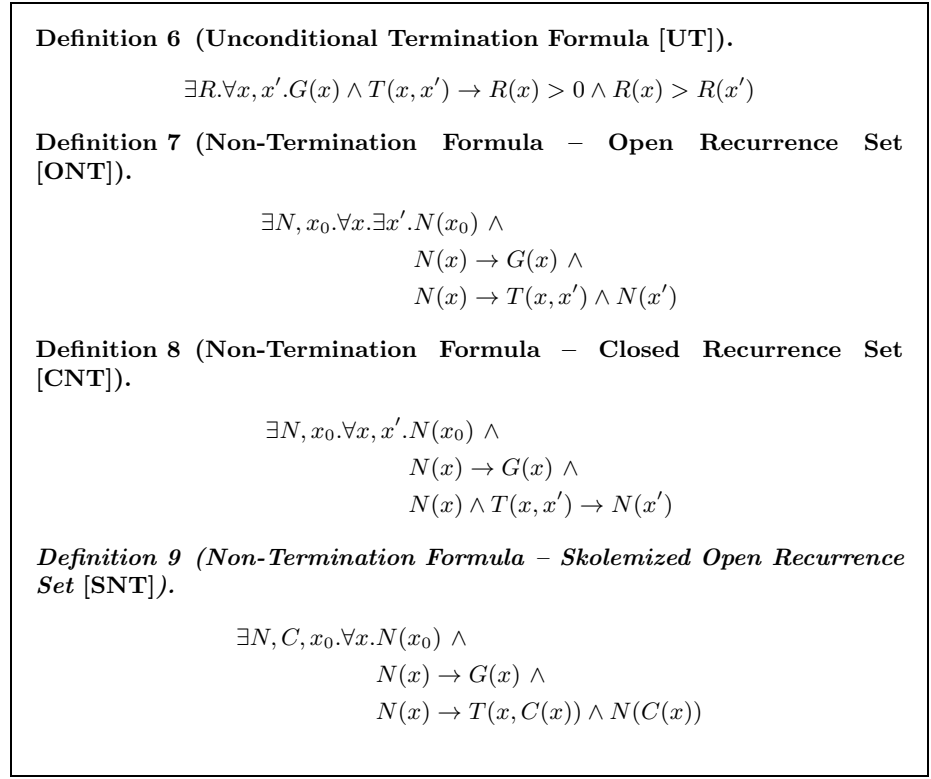

\begin{framed}
\begin{definition}[Unconditional Termination Formula {\bf [UT]}]
\label{def:UT}
\begin{align*}
 \exists R . \forall x, x' . & G(x) \wedge T(x, x') \rightarrow R(x) > 0 \wedge R(x) > R(x')
\end{align*}
\end{definition}

\begin{definition}[Non-Termination Formula -- Open Recurrence Set  {\bf [ONT]}]
\label{def:ont}
 \begin{align*}
  \exists N, x_0 . \forall x . \exists x' . & N(x_0) ~\wedge \\ &  N(x) \rightarrow G(x) ~ \wedge \\
							& N(x) \rightarrow T(x, x') \wedge N(x') 
 \end{align*}
\end{definition}

\begin{definition}[Non-Termination Formula -- Closed Recurrence Set {\bf [CNT]}]
\label{def:cnt}
 \begin{align*}
  \exists N, x_0 . \forall x, x' . & N(x_0) ~ \wedge \\ & N(x) \rightarrow G(x) ~ \wedge \\
							& N(x) \wedge T(x, x') \rightarrow N(x') 
 \end{align*}

\begin{definition}[Non-Termination Formula -- Skolemized Open Recurrence Set  {\bf [SNT]}]
\label{def:snt}
 \begin{align*}
  \exists N, C, x_0 . \forall x . & N(x_0) ~\wedge \\ &  N(x) \rightarrow G(x) ~ \wedge \\
							& N(x) \rightarrow T(x, C(x)) \wedge N(C(x))
 \end{align*}
\end{definition}

\end{definition}

%% \begin{definition}[Non-deterministic Non-Termination Formula - Closed Recurrence Set {\bf [Nondet-CNT]}]
%% \label{def:nondet-cnt}
%%  \begin{align*}
%%   \exists N, x_0 . \forall x, x' . \exists y_0. & N(x_0) ~ \wedge ~ N(x) \rightarrow G(x) ~ \wedge \\
%% 							& N(x) \wedge T_u(x, x') \wedge A(x') \rightarrow N(x')~
%%  \end{align*}
%% \end{definition}

\end{framed}
\caption{Formulae encoding the termination and non-termination of a single loop} \label{fig:single_loop}
\end{figure*}
 %This means that we will not be able to use the solver for the synthesis fragment to solve it.

\noindent {\bf Unconditional termination.}
We say that a loop $L(G, T)$ is unconditionally terminating iff it eventually
terminates regardless of the state it starts in. To prove unconditional termination, it suffices
to find a ranking function for \mbox{$T \cap (G \times X)$}, i.e.~$T$ restricted to states satisfying the loop's guard.
% If $L$ contains nested loops, we replace $T$ with an over-approximation of the loop body $T_o$,
% as we will discuss in Section~\ref{sec:env}.

% When defining $L$'s termination 
% formula, we need only consider an over-approximation of its transition relation $T$, which we call $T_o$.
% Note that this distinction is only important if $L$'s body contains nested loops, for which we'll have to 
% consider the over-approximation of their transition relation's transitive closure
% as shown in Section~\ref{sec:env}. For any other statements, $T_o$ and $T$ are equivalent.

\begin{theorem}
\label{thm:ut}
 The loop $L(G, T)$ terminates from every start state iff formula {\bf [UT]} (Definition~\ref{def:UT}, Figure~\ref{fig:single_loop}) is satisfiable.
\end{theorem}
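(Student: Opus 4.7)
The plan is to prove both directions of the biconditional separately; this is a standard argument about well-founded relations, and I expect neither direction to be technically difficult.

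For the ($\Leftarrow$) direction, suppose [UT] is witnessed by a function $R$ and assume for contradiction that the loop admits an infinite trace $x_0, x_1, x_2, \ldots$ with $G(x_i) \wedge T(x_i, x_{i+1})$ for every $i$. Instantiating the matrix of [UT] at each consecutive pair yields an infinite strictly descending chain $R(x_0) > R(x_1) > R(x_2) > \cdots$ in the codomain of $R$, each value bounded below by $0$. Since the codomain is well-founded (concretely $\mathbb{N}$ in the bit-vector setting, where the clause $R(x) > 0$ pins the range in a well-ordered set), no such chain can exist, giving the contradiction.

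For the ($\Rightarrow$) direction, suppose the loop terminates from every start state. I would construct $R$ explicitly by taking $R(x)$ to be the length of the longest $T$-trace starting at $x$ that remains within $G$; that is, the supremum over all $n$ such that there exist states $x = x_0, x_1, \ldots, x_n$ with $G(x_i) \wedge T(x_i, x_{i+1})$ for every $i < n$. States outside $G$, or states with no successor, may be assigned any value, since they do not appear on the left-hand side of [UT]. To verify the two conjuncts, observe that if $G(x) \wedge T(x, x')$ then any $T$-trace of length $n$ from $x'$ extends to a trace of length $n+1$ from $x$, so $R(x) \geq R(x') + 1$, which gives both $R(x) > R(x')$ and $R(x) \geq 1 > 0$.

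The main obstacle is showing that the supremum defining $R$ is actually attained as a finite natural number, so that $R$ is genuinely well-defined with codomain $\mathbb{N}$. I would argue this via K\"onig's lemma: because the state space $X$ is finite, every state has finitely many $T$-successors, so the tree of $G$-traces rooted at $x$ is finitely branching; if it had no finite bound on depth it would contain an infinite branch, contradicting termination from $x$. For infinite state spaces the same construction works if one lets $R$ take ordinal values and performs transfinite induction on the rank of the well-founded relation $T \cap (G \times X)$, which matches the remark in Section~\ref{sec:ranking.functions} about countable-ordinal co-domains, but this generality is not needed for the fixed-width setting that the paper targets.
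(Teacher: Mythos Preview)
The paper does not actually supply a proof of Theorem~\ref{thm:ut}; the theorem is stated and then immediately used. Your argument is correct and is the standard one for this equivalence. The ($\Leftarrow$) direction is immediate from well-foundedness, and your ($\Rightarrow$) construction---taking $R(x)$ to be the length of the longest $G$-constrained trace from $x$---is exactly the canonical rank function for a well-founded relation; your verification of the two conjuncts of {\bf [UT]} is clean.

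The appeal to K\"onig's lemma is sound, though in the finite-state setting the paper targets a direct pigeonhole argument already suffices: any trace of length exceeding $|X|$ must revisit a state, yielding a cycle in $T \cap (G \times X)$ and hence an infinite execution, so every trace is bounded by $|X|$ and $R$ takes values in $\{0,\ldots,|X|\}$. Your closing remark about ordinal-valued ranks for the infinite-state case is also correct and aligns with the paper's aside in Section~\ref{sec:ranking.functions}.
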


% \begin{proof}
%  Then the codomain of $R$ is a well founded set and $R$ is an order homomorphism.
% 
%  (Do this proof properly).
% \end{proof}

As the existence of a ranking function is equivalent to the satisfiability
of the formula {\bf [UT]}, a satisfiability witness is a ranking
function and thus a proof of $L$'s unconditional termination.

Returning to the program from Figure~\ref{fig:motivation.a}, we
can see that the corresponding second-order SAT formula {\bf [UT]} is satisfiable,
as witnessed by the function $R(x) = x$.  Thus, $R(x) = x$ constitutes a
proof that the program in Figure~\ref{fig:motivation.a} is unconditionally
terminating.

Note that different formulations for unconditional termination are possible. 
We are aware of a proof rule based on transition invariants, i.e.~supersets
of the transition relation's transitive
closure~\cite{DBLP:conf/pldi/GrebenshchikovLPR12}.  This formulation assumes
that the second-order logic has a primitive predicate for disjunctive
well-foundedness.  By contrast, our formulation in Definition~\ref{def:UT}
does not use a primitive disjunctive well-foundedness predicate.  \\

\noindent{\bf Non-termination.}
Dually to termination, we might want to consider the non-termination of a loop.  If a loop terminates,
we can prove this by finding a ranking function %and supporting invariant 
witnessing the satisfiability of formula {\bf[UT]}.  What then would a proof of non-termination look like?

Since our program's state space is finite, a transition relation
induces an infinite execution iff some state is visited infinitely
often, or equivalently $ \exists x . T^+(x, x)$.
Deciding satisfiability of this formula directly would require a logic
that includes a transitive closure operator, $\bullet^+$.  Rather than
introduce such an operator, we will characterise non-termination
using the second-order SAT formula {\bf [ONT]} (Definition~\ref{def:ont}, Figure~\ref{fig:single_loop})
encoding the existence of an \emph{(open) recurrence set}, i.e.~a nonempty 
set of states $N$ such that for each $s \in N$ there
exists a transition to some $s' \in N$ \cite{DBLP:conf/popl/GuptaHMRX08}.
%% Gupta et al.~\cite{DBLP:conf/popl/GuptaHMRX08} characterise non-termination
%% of a transition relation $T$ by the existence of an \emph{(open) recurrence
%% set}, i.e.~a nonempty set of states $N$ such that for each $s \in N$ there
%% exists a transition to some $s' \in N$.
%% The notion of open recurrence set is encoded by formula {\bf [ONT]} (Definition~\ref{def:ont}).

\begin{theorem}
\label{thm:ont}
 The loop $L(G, T)$ has an infinite execution iff formula {\bf [ONT]} (Definition~\ref{def:ont}) is satisfiable.
\end{theorem}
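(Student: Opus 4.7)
The statement is a biconditional, so the plan is to prove each direction independently. Both directions are direct: the formula $\mathbf{[ONT]}$ existentially asserts a nonempty set $N$ of states, all of which satisfy the loop guard $G$ and each of which has at least one $T$-successor inside $N$. This is exactly what one needs to either witness or extract from an infinite execution.

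For the forward direction, I would assume an infinite execution exists, i.e.\ a sequence $x_0, x_1, x_2, \ldots$ of states with $G(x_i)$ and $T(x_i, x_{i+1})$ for every $i \geq 0$. I would then define the witness set as the trace itself:
\[
  N \triangleq \{ x_i \mid i \geq 0 \},
\]
together with the start state $x_0$ already named by the sequence. The three conjuncts of $\mathbf{[ONT]}$ are then checked directly: $N(x_0)$ holds by construction; for any $x \in N$ we have $x = x_i$ for some $i$, so $G(x) = G(x_i)$ holds by the assumption on the execution; and for the same $i$, the element $x' \triangleq x_{i+1}$ satisfies both $T(x,x')$ and $N(x')$, which is the inner existential witness.

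For the backward direction, I would assume $\mathbf{[ONT]}$ is satisfiable, with witnesses $N$ and $x_0$, and build an infinite execution by induction. The base case sets the first state to $x_0$, which belongs to $N$ by the first conjunct. For the inductive step, suppose we have already constructed $x_0, \ldots, x_i$ with $N(x_i)$. By the second conjunct $G(x_i)$ holds, and by the third conjunct (instantiating the $\forall x \exists x'$ at $x = x_i$) there exists some $x_{i+1}$ with $T(x_i, x_{i+1})$ and $N(x_{i+1})$; pick any such witness as the next state. The resulting sequence satisfies $G(x_i) \wedge T(x_i, x_{i+1})$ at every index, so the loop has an infinite execution starting from $x_0$.

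The proof is essentially routine, and the only conceptually interesting point is the backward direction's reliance on choosing a successor at each step. Since the third conjunct is stated with an inner existential rather than an explicit Skolem function (unlike $\mathbf{[SNT]}$, which builds $C$ into the formula), one must invoke some form of dependent choice to extract the infinite sequence from the family of per-state existential witnesses. This is the ``main obstacle'', but it is a very mild one: the underlying state space $X$ is finite, so all the existentials range over finite sets and the required choice is unproblematic (in fact, $\mathbf{[ONT]}$ and $\mathbf{[SNT]}$ are equivalent in this setting, which the paper appears to exploit elsewhere). No appeal to well-foundedness or to the loop guard at $x_0$ beyond what is captured by $N(x_0) \Rightarrow G(x_0)$ is needed.
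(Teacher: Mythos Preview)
Your proof is correct and entirely standard. The paper itself does not give a proof of this theorem: it merely states it and follows with the one-sentence remark ``If this formula is satisfiable, $N$ is an open recurrence set for $L$, which proves $L$'s non-termination,'' which amounts to a gesture at the backward direction only. Your argument fills in both directions explicitly, and your observation about dependent choice (and its triviality over the finite state space assumed throughout the paper) is accurate and worth making.
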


% \begin{proof}
%  Then $x_0$ is an under-approximation to the reachable states on entry to $L$, and $N$ is an open recurrence set.
%  
%  (Do this proof properly).
% \end{proof}

If this formula is satisfiable, $N$ is an open recurrence set for $L$, which proves
$L$'s non-termination. The issue with this formula is the additional level of quantifier alternation as compared to second-order SAT
(it is an $\exists \forall \exists$ formula).  To eliminate the innermost existential quantifier,
we introduce a Skolem function $C$ that chooses the successor $x'$, which we then existentially quantify over.
This results in formula {\bf [SNT]} (Definition~\ref{def:snt}, Figure~\ref{fig:single_loop}).

\begin{theorem}
 \label{thm:snt}
 Formula {\bf [ONT]} (Definition~\ref{def:ont}) and formula {\bf [SNT]} (Definition~\ref{def:snt}) are equisatisfiable.
\end{theorem}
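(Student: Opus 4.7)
The plan is to prove the two directions of equisatisfiability separately, the argument being essentially textbook Skolemization specialized to the finite state space $X$ that the paper is working over.

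First I would observe that the conjuncts $N(x_0)$ and $N(x)\rightarrow G(x)$ in [ONT] do not depend on the existentially bound $x'$, so [ONT] is logically equivalent to
\begin{align*}
 \exists N, x_0 . \; N(x_0) \;\wedge\; \forall x . \bigl( (N(x)\rightarrow G(x)) \;\wedge\; (N(x) \rightarrow \exists x'.\, T(x,x')\wedge N(x')) \bigr).
\end{align*}
In this form the nested $\exists x'$ is a $\forall\exists$ pattern in $x$ and $x'$, which is exactly what Skolemization removes by introducing a function symbol $C$ such that $x'$ is replaced by $C(x)$. Doing this yields [SNT] verbatim.

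For the forward direction ([SNT] $\Rightarrow$ [ONT]), I would take a satisfying witness $(N, C, x_0)$ for [SNT] and show that $(N, x_0)$ witnesses [ONT]: for every $x$, the choice $x' := C(x)$ makes the body of [ONT] true, since [SNT] guarantees that whenever $N(x)$ holds we have $T(x, C(x))\wedge N(C(x))$, and the other two conjuncts are identical.

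For the backward direction ([ONT] $\Rightarrow$ [SNT]), I would start from a witness $(N, x_0)$ of [ONT] and construct a Skolem function $C : X \to X$ as follows. For each $x \in X$, if $N(x)$ holds then by [ONT] there exists at least one $x'$ with $T(x,x') \wedge N(x')$, so pick any such $x'$ and set $C(x) := x'$; otherwise set $C(x)$ to any fixed value (say $x$ itself). Because $X$ is finite, this definition is unproblematic and requires no set-theoretic choice principle — the selection is just a finite enumeration. One then verifies that $(N, C, x_0)$ satisfies [SNT], where the only nontrivial conjunct, $N(x) \rightarrow T(x, C(x)) \wedge N(C(x))$, holds by construction.

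There is no real obstacle here: the main thing to be careful about is that the witness $C$ must be defined on the whole of $X$ (not just on $N$), which is why we explicitly handle the $\neg N(x)$ case. The only conceptual point worth flagging in the write-up is that because $X$ is finite, the equisatisfiability is a purely combinatorial fact and does not rely on the axiom of choice, which matters because [SNT] is the formulation that actually lives inside the second-order SAT fragment of Definition~\ref{def:2sat}.
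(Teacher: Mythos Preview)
Your proposal is correct and follows the same approach the paper takes: the paper does not give a formal proof of this theorem but simply remarks that {\bf [SNT]} arises from {\bf [ONT]} by Skolemizing the inner existential, which is exactly the argument you have spelled out in detail.
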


This extra second-order term introduces some complexity to the formula, which
we can avoid if the transition relation $T$ is deterministic.
\begin{definition}[Determinism]
 A relation $T$ is deterministic iff each state $x$ has exactly one successor under $T$:
 $$\forall x . \exists x' . T(x, x') \wedge \forall x'' . T(x, x'') \rightarrow x'' = x'$$
\end{definition}
In order to describe a deterministic \emph{program} in a way that still allows us
to sensibly talk about termination, we assume the existence of a special sink
state $s$ with no outgoing transitions and such that $\lnot G(s)$ for any
of the loop guards $G$.  The program is deterministic if its transition
relation is deterministic for all states except $s$.

When analysing a deterministic loop, we can make use of the notion of a \emph{closed recurrence set}
introduced by Chen et al.~in~\cite{DBLP:conf/tacas/ChenCFNO14}:  for each
state in the recurrence set $N$, \emph{all} of its successors must be in $N$.
The existence of a closed recurrence set is equivalent to the satisfiability
of formula {\bf [CNT]} in Definition~\ref{def:cnt}, which is already in second-order SAT without needing Skolemization.

We note that if $T$ is deterministic, every open recurrence set is also a
closed recurrence set (since each state has at most one successor).  Thus,
the non-termination problem for deterministic transition systems is
equivalent to the satisfiability of formula {\bf [CNT]} from Figure~\ref{fig:single_loop}.

\begin{theorem}
\label{thm:cnt}
 If $T$ is deterministic,
formula {\bf [ONT]} (Definition~\ref{def:ont}) and formula {\bf [CNT]} (Definition~\ref{def:cnt}) are equisatisfiable.
\end{theorem}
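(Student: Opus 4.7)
The plan is to show that any witness $(N, x_0)$ satisfying one formula also satisfies the other, so that the two formulae not only are equisatisfiable but in fact admit exactly the same witnesses in the deterministic setting. The pivotal observation, already flagged in the prose immediately preceding the theorem, is that under determinism each state in $N$ has a \emph{unique} successor, so ``some successor lies in $N$'' and ``every successor lies in $N$'' collapse to the same statement.

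First I would handle the direction \textbf{[ONT]}~$\Rightarrow$~\textbf{[CNT]}. Suppose $(N, x_0)$ witnesses \textbf{[ONT]}. The first two conjuncts ($N(x_0)$ and $N(x) \rightarrow G(x)$) are literally the same in \textbf{[CNT]}, so they transfer for free. For the third conjunct of \textbf{[CNT]}, fix arbitrary $x, x'$ with $N(x)$ and $T(x,x')$. From \textbf{[ONT]} applied to $x$, there exists some $y$ with $T(x,y) \wedge N(y)$. By determinism of $T$, the successor of $x$ is unique, hence $x' = y$, and therefore $N(x')$, as required.

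Next I would handle \textbf{[CNT]}~$\Rightarrow$~\textbf{[ONT]}. Suppose $(N, x_0)$ witnesses \textbf{[CNT]}. Again the first two conjuncts transfer verbatim. For the third conjunct of \textbf{[ONT]}, fix $x$ with $N(x)$. Because $N(x) \rightarrow G(x)$, we have $G(x)$, and hence $x \neq s$ (recall $\neg G(s)$ for the designated sink state $s$). Since $T$ is deterministic on all states except $s$, there exists $x'$ with $T(x,x')$; pick this unique $x'$ as the existential witness. By the third conjunct of \textbf{[CNT]}, $N(x) \wedge T(x,x')$ gives $N(x')$, so $x'$ witnesses the $\exists x'$ of \textbf{[ONT]}.

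The main obstacle, such as it is, lies in justifying the existence of a successor in the backward direction: \textbf{[CNT]} on its own does not require states in $N$ to have outgoing transitions, so without some extra hypothesis a closed recurrence set could in principle be a set of stuck states. The conventions set up just before the theorem---that determinism is defined to mean ``every non-sink state has exactly one successor'' and that the sink $s$ satisfies $\neg G(s)$---are exactly what rules this degenerate case out, via the conjunct $N(x) \rightarrow G(x)$. Once this bookkeeping is in place, both directions are immediate from the determinism-based collapse of the two quantifier patterns.
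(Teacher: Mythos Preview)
Your proposal is correct and follows exactly the approach the paper sketches in the prose preceding the theorem: the key observation that under determinism each state has a unique successor, so open and closed recurrence sets coincide. Your treatment is in fact more careful than the paper's one-line justification, since you explicitly handle the \textbf{[CNT]}~$\Rightarrow$~\textbf{[ONT]} direction and correctly invoke the sink-state convention to guarantee that states in $N$ have outgoing transitions.
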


% \begin{proof}
%  If $T_u$ is deterministic, then each $x$ satisfying $G(x)$ has a unique successor $S(x)$, so $\forall x, x' . G(x) \wedge T_u(x, x') \leftrightarrow x' = S(x)$.
%  
%  (Finish this proof).
% \end{proof}

% Note that by defining $T_u$'s determinism as the existence of exactly one
% successor for any state satisfying $L$'s guard, we eliminate the possibility
% of $T_u$ being $\mathit{false}$, which would otherwise constitute a
% counterexample for Theorem~\ref{thm:cnt}.

So if our transition relation is deterministic, we can say, without
loss of generality, that non-termination of the loop is equivalent
to the existence of a closed recurrence set.  However if $T$ is
non-deterministic, it may be that there is an open recurrence
set but not closed recurrence set.  To see this, consider the following
loop:
\begin{lstlisting}[language=C]
while(x != 0) {
  y = nondet();
  x = x-y;
}
\end{lstlisting}

It is clear that this loop has many non-terminating executions,
e.g. the execution where \lstinline!nondet()! always returns 0.
However each state has a successor
that exits the loop, i.e.~when \lstinline|nondet()| returns
the value currently stored in \lstinline|x|.  So this loop
has an open recurrence set, but no closed recurrence set
and hence we cannot give a proof of its non-termination
with {\bf [CNT]} and instead must use {\bf [SNT]}.

% Only for certain choices of \lstinline!y! the loop is non-terminating.  In
% order to restrict the values that may be assigned to \texttt{y}, we need an
% assumption associated with the non-deterministic choice.  However, such an
% assumption may cause a reachable state at the non-deterministic assignment
% to not have any successor.  In such a case the termination formula would be
% satisfiable, although the execution halts.  This situation can be avoided by
% checking the existence of at least one successor for each reachable state at
% the non-deterministic assignment.  Although we did put a reasonable amount
% of effort into it, we failed to come up with a formulation for the
% non-termination of such non-deterministic transition relations inside the
% synthesis fragment.
% 
% %formula {\bf [Nondet-CNT]} in Definition~\ref{def:nondet-cnt}.

\subsection{An Isolated, Nested Loop}
\noindent {\bf Termination.} If a loop $L(G, T)$ has another loop $L'(G', T')$ nested inside it, we cannot directly use {\bf [UT]}
to express the termination of $L$.  This is because the single-step transition relation $T$ must
include the transitive closure of the inner loop $T'^*$, and we do not have a transitive closure
operator in our logic.  Therefore to encode the termination of $L$, we construct an over-approximation
$T_o \supseteq T$ and use this in formula {\bf [UT]} to specify a ranking function.
Rather than explicitly construct $T_o$ using, for example, abstract interpretation, we add constraints to
our formula that encode the fact that $T_o$ is an over-approximation of $T$, and that it is
precise enough to show that $R$ is a ranking function.

% All we need to do in order to use {\bf [CT]} in any context %environment and regardless of the complexity of the loops's body 
% is generate constraints over-approximating %abstracting 
% the environment and the loop's body. %: for termination, we need over-approximating abstractions, whereas for non-termination 

As the generation of such constraints is standard and covered by several other works \cite{DBLP:conf/pldi/GrebenshchikovLPR12,DBLP:conf/pldi/GulwaniSV08}, 
we will not provide the full algorithm, but rather illustrate it through the example in Figure~\ref{fig:environment-model}.
Full details of this construction appear in the extended version of this paper.
For the current example, the termination formula is given on the right side of 
Figure~\ref{fig:environment-model}:
%% Following the same design, let us assume that we have already proven the termination of loop $L_1$, and we are now trying to prove that $L_2$ and $L_3$ terminate, which we encode
%% as the conjunction of constraints on the right side of the figure:
$T_o$ is a summary of $L_1$ that over-approximates its transition relation;
$R_1$ and $R_2$ are ranking functions for $L_1$ and $L_2$, respectively.

\begin{figure*}
\begin{framed}
%\resizebox{.25\textwidth}{!}{
\begin{minipage}{.2\textwidth}
\begin{lstlisting}[mathescape=true,basicstyle=\tiny]
$L_1:$
while (i<n){
  j = 0;

$L_2:$
  while (j$\leq$i){
    j = j + 1;
  }

  i = i + 1;
}
\end{lstlisting}
\end{minipage}
%}
\vline
\resizebox{.75\textwidth}{!}{
\begin{minipage}{.85\textwidth}
 \begin{align*}
 \exists T_o, R_1, R_2 . \forall i, j, n, i', j', n' . \\
   i < n & \rightarrow T_o(\langle i, j, n \rangle , \langle i, 0, n \rangle) ~ \wedge \\
   j \leq i \wedge T_o(\langle i', j', n' \rangle, \langle i, j, n\rangle ) & \rightarrow R_2(i, j, n) > 0 ~ \wedge \\
     & R_2(i, j, n) > R_2(i, j+1, n) ~ \wedge \\
     & T_o(\langle i', j', n' \rangle, \langle i, j+1, n \rangle) ~ \wedge \\
   i < n \wedge S(\langle i, j, n \rangle, \langle i', j', n' \rangle) \wedge j' > i' & \rightarrow R_1(i, j, n) > 0 ~ \wedge \\
                                         & R_1(i, j, n) > R_1(i+1, j, n)
   \end{align*}
\end{minipage}
}
\end{framed}
\caption{A program with nested loops and its termination formula\label{fig:environment-model}}
\end{figure*}

%% \begin{figure*}
%% \begin{framed}
%% \begin{minipage}{0.17\textwidth}
%% \begin{lstlisting}[mathescape=true]
%% $L_1:$
%% while (i<n){
%%   j = 0;

%% $L_2:$
%%   while (j$\leq$i){
%%     j = j + 1;
%%   }

%%   i = i + 1;
%% }
%% \end{lstlisting}
%% \end{minipage}
%% \vline
%% \begin{minipage}{0.82\textwidth}
%% \begin{align*}
%%  \exists T_o, R_1, R_2 . \forall i, j, n, i', j', n' . \\
%%    i < n & \rightarrow T_o(\langle i, j, n \rangle , \langle i, 0, n \rangle) ~ \wedge \\
%%    j \leq i \wedge T_o(\langle i', j', n' \rangle, \langle i, j, n\rangle ) & \rightarrow R_2(i, j, n) > 0 ~ \wedge \\
%%      & R_2(i, j, n) > R_2(i, j+1, n) ~ \wedge \\
%%      & T_o(\langle i', j', n' \rangle, \langle i, j+1, n \rangle) ~ \wedge \\
%%    i < n \wedge S(\langle i, j, n \rangle, \langle i', j', n' \rangle) \wedge j' > i' & \rightarrow R_1(i, j, n) > 0 ~ \wedge \\
%%                                          & R_1(i, j, n) > R_1(i+1, j, n)
%%    \end{align*}
%% \end{minipage}
%% \end{framed}
%% \caption{A program with nested loops and its termination formula\label{fig:environment-model}}
%% \end{figure*}

\begin{figure*}
 \begin{framed}

\begin{definition}[Conditional Termination Formula {\bf [CT]}]
\label{def:ct}
 \begin{align*}
  \exists R, W . \forall x, x' . & I(x) \wedge G(x) \rightarrow W(x) ~ \wedge \\
                                 & G(x) \wedge W(x) \wedge T(x, x') \rightarrow W(x') \wedge R(x) > 0 
  \wedge R(x) > R(x')
 \end{align*}
\end{definition}

 \end{framed}
\caption{Formula encoding conditional termination of a loop} \label{fig:conditional_termination}
\end{figure*}

\noindent {\bf Non-Termination.}
Dually to termination, when proving non-termination, we need to
under-approximate the loop's body and apply formula {\bf [CNT]}.
%
% Recall that there is one more constraint that we must obey when doing so: we
% must stay inside the synthesis fragment!  This becomes more challenging now
% then when dealing with over-approximations due to the existential nature of
% under-approximations: they require an additional existential quantifier,
% which might result in an additional quantifier alternation, i.e.~$\exists
% \forall \exists$.
% 
% This issue does not arise when abstracting the environment, as its
% under-approximation and the loop under consideration make use of existential
% quantification at the same nestedness level.
%
%
Under-approximating the inner loop can be done with a nested existential quantifier, resulting in
$\exists \forall \exists$ alternation, which we could eliminate with Skolemization.  
However, we observe that
%As we try to avoid an extra second order unknown, 
unlike a ranking function,
the defining property of a recurrence set is \emph{non relational} -- if we
end up in the recurrence set, we do not care exactly where we came from as
long as we know that it was also somewhere in the recurrence set.  
This allows us to cast non-termination of nested loops as the formula shown in
Figure~\ref{fig:nonterm-nested}, which does not use a Skolem function.

If the formula on the right-hand side of the figure is satisfiable, then
$L_1$ is non-terminating, as witnessed by the recurrence set $N_1$ and the
initial state $x_0$ in which the program begins executing.  There are two
possible scenarios for $L_2$'s termination:
\begin{itemize}
\item If $L_2$ is terminating, then $N_2$ is an inductive invariant that
reestablished $N_1$ after $L_2$ stops executing: $\lnot G_2(x) \wedge N_2(x)
\wedge P_2(x,x') \rightarrow N_1(x') $.
\item If $L_2$ is non-terminating, then $N_2 \wedge G_2$ is its recurrence set.
\end{itemize}

%% \begin{figure*}
%% \begin{framed}
%%  \begin{minipage}{0.16\textwidth}
%% \begin{lstlisting}[mathescape=true]
%% $L_1$:
%% while ($G_1$) {
%%   $P_1$;

%% $L_2$:
%%   while ($G_2$) {
%%     $B_2$;
%%   }

%%   $P_2$;
%% }
%% \end{lstlisting}
%% \end{minipage}
%% \vline
%% \begin{minipage}{0.82\textwidth}
%% \begin{align*}
%%  \exists N_1, N_2, x_0 . \forall x, x' . \\
%%   N_1(x_0) & \, \wedge \\
%%   N_1(x) & \rightarrow G_1(x) \, \wedge \\
%%   N_1(x) \wedge P_1(x,x') & \rightarrow N_2(x') \, \wedge \\
%%   G_2(x) \wedge N_2(x) \wedge B_2(x,x') & \rightarrow N_2(x') \, \wedge \\
%%   \lnot G_2(x) \wedge N_2(x) \wedge P_2(x,x') & \rightarrow N_1(x') 
%% \end{align*}
%% \end{minipage}
%% \end{framed}

%% \caption{Formula encoding non-termination of nested loops \label{fig:nonterm-nested}}
%% \end{figure*}

\begin{figure*}
\begin{framed}
\begin{minipage}{0.2\textwidth}
\begin{lstlisting}[mathescape=true,basicstyle=\tiny]
$L_1$:
while ($G_1$) {
  $P_1$;

$L_2$:
  while ($G_2$) {
    $B_2$;
  }

  $P_2$;
}
\end{lstlisting}
\end{minipage}
\vline
\begin{minipage}{0.8\textwidth}
\begin{align*}
 \exists N_1, N_2, x_0 . \forall x, x' . \\
  N_1(x_0) & \, \wedge \\
  N_1(x) & \rightarrow G_1(x) \, \wedge \\
  N_1(x) \wedge P_1(x,x') & \rightarrow N_2(x') \, \wedge \\
  G_2(x) \wedge N_2(x) \wedge B_2(x,x') & \rightarrow N_2(x') \, \wedge \\
  \lnot G_2(x) \wedge N_2(x) \wedge P_2(x,x') & \rightarrow N_1(x') 
\end{align*}
\end{minipage}
\end{framed}

\caption{Formula encoding non-termination of nested loops \label{fig:nonterm-nested}}
\end{figure*}

\subsection{Composing a Loop with the Rest of the Program} \label{sec:env}

Sometimes the termination behaviour of a loop depends on the rest of the program.  That is to say,
the loop may not terminate if started in some particular state, but that state is
not actually reachable on entry to the loop.  The program as a whole
terminates, but if the loop were considered in isolation we would not be able to prove that
it terminates. We must therefore encode a loop's interaction with the rest of the program 
in order to do a sound termination analysis.\\

Let us assume that we have done some preprocessing of our program which has identified
loops, straight line code blocks and the control flow between these.  In particular,
the control flow analysis has determined which order these code blocks execute in,
and the nesting structure of the loops.

\noindent {\bf Conditional termination.}
Given a loop $L(G,T)$, if $L$'s termination depends on the state it begins
executing in, we say that $L$ is \emph{conditionally terminating}.
The information we require of the rest of the program is a predicate $I$ which
over-approximates the set of states that $L$ may begin executing in.
That is to say, for each state $x$ that is reachable on entry to $L$,
we have $I(x)$.

\begin{theorem}
\label{thm:ct}
 The loop $L(G, T)$ terminates when started in any state satisfying $I(x)$ iff formula {\bf [CT]}
 (Definition~\ref{def:ct}, Figure~\ref{fig:conditional_termination}) is satisfiable.
\end{theorem}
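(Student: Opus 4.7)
The plan is to prove both implications of the biconditional in parallel with the structure used in Theorem~\ref{thm:ut}, but accounting for the fact that the set of entry states is now restricted by the precondition $I$ rather than being all of $X$. The two second-order witnesses $W$ and $R$ play complementary roles: $W$ captures an inductive invariant that over-approximates the reachable-from-$I$ states inside the loop, and $R$ witnesses well-foundedness of $T$ restricted to $W \cap G$.

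For the ``only if'' direction, I would assume that every execution of $L$ starting from a state in $I$ terminates, and construct $W$ and $R$ explicitly. A natural choice is to let $W$ be the $T$-reachable closure of the set $\{x : I(x) \wedge G(x)\}$. By construction $W$ contains every state entering the loop from $I \wedge G$, discharging the first conjunct, and is closed under $T$-steps that remain in $G$, discharging the $W(x')$-part of the second conjunct. For $R$, I would take $R(x)$ to be the length of the longest $T$-trajectory starting from $x$ that stays inside $G$. Because the state space is finite and no such trajectory can be infinite (any infinite trajectory from an $x \in W$ could be prepended with a finite path from some $I \wedge G$ state to $x$, contradicting the hypothesis), $R(x)$ is a well-defined nonnegative integer, strictly positive whenever $x$ still has an outgoing $T$-step inside $G$, and strictly larger than $R(x')$ for any such successor $x'$.

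For the ``if'' direction, I would assume $W, R$ satisfy {\bf [CT]} and pick an arbitrary starting state $x_0$ with $I(x_0)$. If $\lnot G(x_0)$, the loop exits immediately; otherwise the first conjunct gives $W(x_0)$, and a routine induction on the number of iterations, using the second conjunct, shows that every subsequent state $x_i$ of the execution satisfies $W(x_i) \wedge G(x_i)$, and that $R(x_0) > R(x_1) > \cdots$ is a strictly descending sequence bounded below by $0$. Such a sequence cannot be infinite, so the execution terminates.

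The principal obstacle is the forward direction, specifically showing that the natural ``height'' function $R$ is well-defined on the whole of $W$. The argument hinges on the observation that any state $x \in W$ is reachable from a state in $I \wedge G$ along a finite prefix; concatenating such a prefix with a hypothetical infinite trajectory from $x$ would contradict our standing assumption that $L$ terminates from every $I$-state. Once this is established, verifying the ranking conditions $R(x) > 0$ and $R(x) > R(x')$ is immediate from the definition, and the rest is routine book-keeping.
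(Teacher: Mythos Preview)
The paper does not actually supply a proof of Theorem~\ref{thm:ct}; the theorem is stated and immediately followed only by an informal description of what the witnesses $W$ and $R$ mean when {\bf [CT]} holds, which amounts to a sketch of the ``if'' direction. Your proposal is therefore strictly more complete than anything in the paper, and the overall strategy---take $W$ to be the forward closure of $I\wedge G$ under guarded transitions and take $R(x)$ to be the maximal number of further iterations possible from $x$---is the natural one and is sound.

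Two points of precision deserve tightening before the argument is airtight. First, the phrase ``$T$-reachable closure'' followed by ``closed under $T$-steps that remain in $G$'' is ambiguous, and the distinction matters. The closure you need is under the relation $\{(x,x') : G(x)\wedge T(x,x')\}$, i.e.\ $T$-steps whose \emph{source} satisfies $G$; the second conjunct of {\bf [CT]} demands $W(x')$ whenever $G(x)\wedge W(x)\wedge T(x,x')$, irrespective of whether $G(x')$ holds. More importantly, this guarded closure is exactly what makes your well-definedness argument for $R$ go through: you need every $x\in W$ to be reachable from some $I\wedge G$ state along a path that is itself a valid prefix of a loop execution (each step taken with the guard true), so that prepending it to a hypothetical infinite run from $x$ yields an infinite run from an $I$-state. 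If $W$ were the unrestricted $T$-closure, an $x\in W$ might only be reachable via a path that has already left $G$, and the concatenation would not be a loop execution.

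Second, your $R$ is guaranteed finite only on $W$; states outside $W$ may lie on $G$-cycles that have nothing to do with $I$. Since the antecedent of the second conjunct contains $W(x)$ this is harmless---simply set $R(x)=0$ for $x\notin W$---but the write-up should make the extension explicit. With these two clarifications the proof is complete.
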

% 
% \begin{proof}
%  Do this proof.
% \end{proof}

If formula {\bf [CT]} is satisfiable, two witnesses are returned:
\begin{itemize}
\item $W$ is an inductive invariant of $L$ that is established by the initial states $I$ if the loop
guard $G$ is met.
\item $R$ is a ranking function for $L$ as restricted by $W$ -- that is to say, $R$ need only
be well founded on those states satisfying $W \wedge G$.  Since $W$ is an inductive invariant of $L$,
$R$ is strong enough to show that $L$ terminates from any of its initial states.
\end{itemize}

$W$ is called a \emph{supporting invariant} for $L$ and $R$ proves termination relative to $W$.
We require that $I \wedge G$ is strong enough to establish the base case of $W$'s inductiveness.

Conditional termination is illustrated by the program in Figure~\ref{fig:motivation.h},
which is encoded as:
\begin{align*}
            I(\langle x, y \rangle) & \triangleq y = 1 \\
            G(\langle x, y \rangle) & \triangleq x > 0 \\
            T(\langle x, y \rangle, \langle x', y' \rangle) & \triangleq x' = x - y \wedge y' = y 
\end{align*}
If the initial states $I$ are ignored, this loop cannot be shown to terminate, since any state with $y = 0$ and $x > 0$
would lead to a non-terminating execution.

However, formula {\bf [CT]} is satisfiable, as witnessed by:
\begin{align*}
R(\langle x,y\rangle) & = x\\
W(\langle x, y \rangle ) & \triangleq y  = 1
\end{align*}

This constitutes a proof that the program as a whole terminates, since the loop always begins
executing in a state that guarantees its termination.\\

\subsection{Generalised Termination and Non-termination Formula}

At this point, we know how to construct two formulae for a loop $L$: one
that is satisfiable iff $L$ is terminating and another that is satisfiable
iff it is non-terminating.  We will call these formulae $\phi$ and $\psi$,
respectively:
\begin{align*}
 \exists P_T . \forall x, x' . \phi(P_T, x, x') \\
 \exists P_N . \forall x . \psi(P_N, x)
\end{align*}
We can combine these:
\begin{align*}
 (\exists P_T . \forall x, x'. \phi(P_T, x, x')) \vee (\exists P_N . \forall x.\, \psi(P_N, x))
\end{align*}
Which simplifies to:
\begin{definition}[Generalised Termination Formula {\bf [GT]}]
\label{def:general-termination}
\begin{align*}
 \exists P_T, P_N. \forall x, x', y.\, \phi(P_T, x, x') \vee \psi(P_N, y)
\end{align*}
\end{definition}

Since $L$ either terminates or does not terminate, this formula is a tautology in second-order SAT.
A solution to the formula would include witnesses $P_N$ and $P_T$, which are putative proofs of non-termination
and termination respectively.  Exactly one of these will be a genuine proof, so we can check
first one and then the other.

% 
% Given a program, our algorithm attempts to prove termination or non-termination for each loop individually 
% by constructing a generalised formula for each of them.
% For nested loops, we generate one single formula for all the nested loops (as illustrated for termination in Figure~\ref{fig:environment-model}).
% 
% For an isolated loop $L(G,T)$, the generalised termination and non-termination formula is the following:
% %
%  \begin{align*}
%   \exists R, N, x_0 . \forall x, x',y,y' . & N(x_0) ~ \wedge \\ & N(x) \rightarrow G(x) ~ \wedge \\
% 							& N(x) \wedge T_u(x, x') \rightarrow N(x')\\
%   & \vee G(y) \wedge T_o(y, y') \rightarrow  R(y){>}0 \wedge R(y){>}R(y')
%  \end{align*}
% %
% %
% where $T_u$ and $T_o$ under- and over-approximate the transition relation $T$, respectively.

%If this formula is satisfiable, then:

%% Our design: for each loop always search for lexicographic ranking function + supporting invariant OR recurrence set.

\subsection{Solving the Second-Order SAT Formula}
In order to solve the second-order generalised formula {\bf [GT]}, we use the solver described in \cite{kalashnikov}.
%
%% \begin{theorem}
%%  \label{thm:synth-semicomplete}
%%  If the existential first-order theory used to express the specification $\sigma$ is decidable
%%  and the domain of inputs $X$ is finite,
%%  the second-order SAT decision procedure is semi-complete -- if a solution $(P, x_0) \models \sigma$
%%  exists then Algorithm~\ref{alg:cegis} will terminate.  However, if no program
%%  satisfies the specification, the algorithm may not terminate.
%% \end{theorem}
%
%
For any satisfiable formula, the solver is guaranteed to find a satisfying assignment to
all the second-order variables. 

In the context of our termination analysis, such a satisfying assignment returned by the solver 
represents either a proof of termination or non-termination, and
takes the form of an imperative program
written in the language $\mathcal{L}$.  An $\mathcal{L}$-program is a list
of instructions, each of which matches one of the patterns shown in
Figure~\ref{fig:l-language}.  An instruction has an opcode (such as
\verb|add| for addition) and one or more operands.  An operand is either a
constant, one of the program's inputs or the result of a previous
instruction.  The $\mathcal{L}$ language has various arithmetic and logical
operations, as well as basic branching in the form of the \verb|ite|
(if-then-else) instruction.

\begin{figure}
\begin{center}
{\small

\setlength{\tabcolsep}{14pt}
Integer arithmetic instructions:

\begin{tabular}{llll}
 \verb|add a b| & \verb|sub a b| & \verb|mul a b| & \verb|div a b| \\
 \verb|neg a| &   \verb|mod a b| & \verb|min a b| & \verb|max a b|
\end{tabular}

\medskip

Bitwise logical and shift instructions:

\begin{tabular}{lll}
 \verb|and  a b| & \verb|or   a b| & \verb|xor a b| \\
 \verb|lshr a b| & \verb|ashr a b| & \verb|not a|
\end{tabular}

\medskip

Unsigned and signed comparison instructions:

\begin{tabular}{lll}
 \verb|le  a b| & \verb|lt  a b| & \verb|sle  a b| \\
 \verb|slt a b| & \verb|eq  a b| & \verb|neq  a b| \\
\end{tabular}

\medskip

Miscellaneous logical instructions:

\begin{tabular}{lll}
 \verb|implies a b| & \verb|ite a b c| &  \\
\end{tabular}

\medskip

\setlength{\tabcolsep}{12pt}

Floating-point arithmetic:

\begin{tabular}{llll}
 \verb|fadd a b| & \verb|fsub a b| & \verb|fmul a b| & \verb|fdiv a b|
\end{tabular}

}
\end{center}

 \caption{The language $\mathcal{L}$}
 \label{fig:l-language}
\end{figure}

\section{Soundness, Completeness and Complexity}
In this section, we show that $\mathcal{L}$ is expressive enough
to capture  (non-)termination proofs for every bit-vector program.
By using this result, %We will then combine this result with the fact that the second-order SAT solver is a semi-decision procedure to 
we then show that our analysis terminates with a valid proof for every input program.

\begin{lemma}
 \label{lem:l-func}
 Every function $f: X \to Y$ for finite $X$ and $Y$ is computable by a finite $\mathcal{L}$-program.
\end{lemma}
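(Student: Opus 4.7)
The plan is to realise any $f: X \to Y$ on a finite domain by an explicit lookup-table program built from the primitives \verb|eq| and \verb|ite|. Since $\mathcal{L}$ admits constants as operands and can test equality of bit-vectors, a straight-line cascade of conditional selects on the input value suffices; no loops, recursion, or fancy arithmetic are needed.

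First I would fix an enumeration $X = \{x_1, \ldots, x_n\}$ and observe that finiteness of $X$ and $Y$ means each element is representable as a fixed-width bit-vector constant (picking widths large enough to encode $|X|$ and $|Y|$ respectively), so each appears as a legal operand in $\mathcal{L}$. With the input bound to the register $x$, the construction is the straight-line sequence
\[
 r_n \;=\; f(x_n), \qquad r_i \;=\; \mathtt{ite}\bigl(\mathtt{eq}(x, x_i),\, f(x_i),\, r_{i+1}\bigr) \quad (i = n-1, \ldots, 1),
\]
and $r_1$ is returned. A straightforward downward induction on $i$ shows that for every $x \in X$ the register $r_1$ holds $f(x)$; the program contains exactly $2n-1$ instructions and is therefore finite.

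If $X$ is in fact a Cartesian product $X_1 \times \cdots \times X_k$ — the relevant case for multi-input $\mathcal{L}$-programs — one can either encode each element of $X$ as a single wide bit-vector (produced once as a constant and tested by \verb|eq|) or, equivalently, replace each equality test by a conjunction of componentwise \verb|eq|'s combined via \verb|and|. Either variant keeps the program straight-line and of size linear in $|X|$. The same cascade also covers the codomain: distinct target values $f(x_i) \in Y$ enter only as constants in the \verb|ite| instructions.

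The main obstacle is not conceptual but bookkeeping: one must check that every constant and intermediate value fits in a width that $\mathcal{L}$ supports, that constants are admissible as instruction operands, and that the result of a prior instruction may be consumed by a later one. All three are guaranteed by the operand convention stated after Figure~\ref{fig:l-language}, so the lookup-table program is a legal finite $\mathcal{L}$-program computing $f$, and the lemma follows.
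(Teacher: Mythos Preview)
Your proof is correct and follows essentially the same approach as the paper's: both build an explicit lookup table via a straight-line cascade of \texttt{eq}/\texttt{ite} instructions over an enumeration of the domain, yielding a program of length $2n-1$ (the paper instantiates this with $X = Y = \mathbb{N}_b^k$ and handles each output coordinate separately, but the construction is the same).
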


\begin{proof}
Without loss of generality, let $X = Y = \mathbb{N}_b^k$ the set of
$k$-tuples of natural numbers less than $b$.
A very inefficient construction which computes the first coordinate
of the output $y$ is:
\begin{verbatim}
t1 = f(0)
t2 = v1 == 1
t3 = ITE(t2, f(1), t1)
t4 = v1 == 2
t5 = ITE(t4, f(2), t3)
...
\end{verbatim}
Where the \verb|f(n)| are literal constants that are to appear in the program text.
This program is of length $2b - 1$, and so all $k$ co-ordinates of the output $y$
are computed by a program of size at most $2bk - k$.
\end{proof}

\begin{corollary}
 Every finite subset $A \subseteq B$ is computable by a finite $\mathcal{L}$-program
 by setting $X = B, Y = 2$ in Lemma~\ref{lem:l-func} and taking the
 resulting function to be the characteristic function of $A$.
\end{corollary}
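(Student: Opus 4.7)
The plan is to invoke Lemma~\ref{lem:l-func} essentially verbatim, so the work is really in spelling out the identification between a subset and a function. First I would fix notation: for $A \subseteq B$, define the characteristic function $\chi_A \colon B \to \{0,1\}$ by $\chi_A(x) = 1$ if $x \in A$ and $\chi_A(x) = 0$ otherwise. Under the standard identification of $2$ with $\{0,1\}$, this is a function $B \to 2$. I would then state explicitly that, in this setting, an $\mathcal{L}$-program ``computes $A$'' precisely when it computes $\chi_A$, so that constructing such a program for $A$ reduces to constructing one for $\chi_A$.

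Next I would check the hypotheses of Lemma~\ref{lem:l-func} for $X = B$ and $Y = 2$. The codomain is trivially finite. For the domain, the only point worth flagging is that although the corollary says ``finite subset $A \subseteq B$'', we actually need $B$ itself to be finite for the lemma to apply directly; this is consistent with the paper's standing assumption that state spaces are finite (bit-vectors of fixed width), and should be stated as part of the setup. With both sets finite, Lemma~\ref{lem:l-func} yields a finite $\mathcal{L}$-program computing $\chi_A$, which by the identification above is the desired $\mathcal{L}$-program for $A$.

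Finally, to make the result quantitatively concrete I would read off the bound from the proof of Lemma~\ref{lem:l-func} with $b = |B|$ and $k = 1$: the construction produces a program of length at most $2|B|-1$, consisting of an \verb|ite| chain that returns the literal $\chi_A(v)$ for each $v \in B$.

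There is no real obstacle here; the only subtlety is the convention linking subsets to characteristic functions, which must be made explicit so that the appeal to Lemma~\ref{lem:l-func} is unambiguous. Beyond that the argument is a one-line specialisation of the lemma.
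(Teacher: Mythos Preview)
Your proposal is correct and follows exactly the approach the paper intends: the paper gives no separate proof for this corollary, since the statement itself already contains the argument (set $X = B$, $Y = 2$, take the characteristic function). Your write-up simply unpacks this with more care, including the useful observation that one really needs $B$ finite (not just $A$), which is implicit in the paper's bit-vector setting, and the explicit size bound $2|B|-1$ read off from the lemma's construction.
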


\begin{theorem}
\label{thm:l-term}
 Every terminating bit-vector program has a ranking function that is expressible in $\mathcal{L}$.
\end{theorem}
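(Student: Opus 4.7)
The plan is to use the fact that bit-vector programs live in a finite state space, together with Lemma~\ref{lem:l-func}, to reduce the problem to exhibiting any concrete ranking function whose domain and codomain are finite. Since $\mathcal{L}$ can express every function between finite sets, constructing such a ranking function for an arbitrary terminating program is enough.

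First I would observe that the state space $X$ of a bit-vector program is finite, of size at most $2^{kn}$ for $n$ variables of width $k$. Unconditional termination then says that the directed graph $(X, T)$ has no infinite path. Because $X$ is finite, this is equivalent to saying that the restriction of $T$ to states reachable along an infinite branch is empty, i.e., the relevant part of $T$ is acyclic. I would then define the canonical ranking function
\[
 R(x) \;=\; \text{length of the longest sequence } x = x_0, x_1, \ldots, x_\ell \text{ with } T(x_i,x_{i+1}).
\]
Since every branching at a state is bounded by $|X|$ and no infinite path exists, König's lemma makes this maximum well-defined, and it is bounded above by $|X|$. Thus $R$ is a total function $X \to \{0,1,\ldots,|X|\}$ between two finite sets. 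Monotonic decrease along $T$ is immediate: if $T(x,x')$ then any maximal trace from $x'$ of length $R(x')$ extends to a trace from $x$ of length $R(x')+1$, so $R(x) \geq R(x')+1 > R(x')$. If the paper's injectivity requirement is taken strictly, I would instead pick any linear extension of the partial order defined by the reflexive-transitive closure of $T$ and let $R(x)$ be the position of $x$ in that extension; the argument that this is monotonically decreasing along $T$ is identical, and this version is automatically injective.

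Once $R$ is in hand as a function between finite sets, the conclusion is one line: apply Lemma~\ref{lem:l-func} (with $X$ as domain and a sufficiently large $\mathbb{N}_b^k$ as codomain, e.g.\ $b = 2^{\lceil \log_2(|X|+1)\rceil}$, $k=1$) to obtain a finite $\mathcal{L}$-program computing $R$. The resulting $\mathcal{L}$-program is then a ranking function for the original bit-vector program, as required.

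The only real subtlety is ensuring that the constructed $R$ genuinely matches the paper's definition of a ranking function (well-founded codomain, monotonic decrease, and possibly injectivity). The well-foundedness of $\mathbb{N}$ under $>$ is immediate, monotonic decrease is built into the construction, and injectivity is either ignorable or handled by the topological-sort variant above. I do not expect this step to be an obstacle; the substantive content of the theorem is entirely in Lemma~\ref{lem:l-func}, and this proof's role is just to package the termination hypothesis into a finite function suitable for that lemma.
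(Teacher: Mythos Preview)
Your proof is correct and ends at the same place (an appeal to Lemma~\ref{lem:l-func}), but the route differs from the paper's. The paper argues abstractly: since the program terminates, \emph{some} ranking function $R:\mathcal{S}\to\mathcal{D}$ exists; by the injectivity clause in the paper's definition of ranking function, $\|\mathcal{D}\|\geq\|\mathcal{S}\|$; restricting the codomain to the image of $R$ and composing with the unique order isomorphism onto $\mathcal{S}$ yields a ranking function $R'':\mathcal{S}\to\mathcal{S}$ that is a permutation of the state space, and Lemma~\ref{lem:l-func} is then invoked with $X=Y=\mathcal{S}$. You instead build a concrete witness directly (longest-path height, or position in a linear extension of the reachability order) and only then invoke the lemma. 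Your argument is more elementary and does not lean on the injectivity requirement except as an afterthought, and it makes the ``terminating $\Rightarrow$ ranking function exists'' direction explicit rather than assumed. The paper's version, on the other hand, lands the codomain exactly on $\mathcal{S}$, so the output is automatically a $k$-tuple of $b$-bit words and the lexicographic structure needed by $\mathcal{L}$ falls out for free; in your version you have to choose a large enough word width for the single-output variant, or split the height into coordinates. One small orientation nit: in your topological-sort variant, if $T(x,x')$ places $x$ before $x'$ in the extension, the position map \emph{increases} along $T$; take the reversed enumeration (or $|X|$ minus the position) to get a decreasing map.
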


\begin{proof}
Let $v_1, \ldots, v_k$ be the variables of the program $P$ under analysis,
and let each be $b$ bits wide.  Its state space $\mathcal{S}$ is then of
size $2^{bk}$.  A~ranking function $R: \mathcal{S} \to \mathcal{D}$ for $P$
exists iff $P$ terminates.  Without loss of generality,
$\mathcal{D}$ is a well-founded total order.  Since $R$ is injective, we have that $\|
\mathcal{D} \| \geq \| \mathcal{S} \|$.  If $\| \mathcal{D} \| > \|
\mathcal{S} \|$, we can construct a function $R': \mathcal{S} \to
\mathcal{D'}$ with $ \| \mathcal{D'} \| = \| \mathcal{S} \|$ by just setting
$R' = R|_\mathcal{S}$, i.e.~$R'$ is just the restriction of $R$ to
$\mathcal{S}$.  Since $\mathcal{S}$ already comes equipped with a natural
well ordering we can also construct $R'' = \iota \circ R'$ where $\iota:
\mathcal{D'} \to \mathcal{S}$ is the unique order isomorphism from
$\mathcal{D'}$ to $\mathcal{S}$.  So assuming that $P$ terminates, there is
some ranking function $R''$ that is just a permutation of $\mathcal{S}$.  If
the number of variables $k > 1$ then in general the ranking function will be
lexicographic with dimension $\leq k$ and each co-ordinate of the output
being a single $b$-bit value.

Then by Lemma~\ref{lem:l-func} with $X = Y = \mathcal{S}$, there exists
a finite $\mathcal{L}$-program computing $R''$.
\end{proof}

\begin{theorem}
\label{thm:l-nonterm}
 Every non-terminating bit-vector program has a non-termination proof expressible in $\mathcal{L}$.
\end{theorem}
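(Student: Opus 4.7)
The plan is to mirror the structure of Theorem~\ref{thm:l-term}, but to build the witnesses from a non-termination certificate rather than from a ranking function. The key point is that since the program operates on fixed-width bit-vectors, its state space $\mathcal{S}$ is finite, so every object appearing in the certificate lives in a finite domain and can therefore be handed to Lemma~\ref{lem:l-func}.

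First, I would invoke Theorems~\ref{thm:ont} and~\ref{thm:snt}: because the program is non-terminating, formula {\bf [ONT]} is satisfiable, and hence so is its Skolemized form {\bf [SNT]}. This yields a triple $(N, C, x_0)$ where $N \subseteq \mathcal{S}$ is an open recurrence set, $C : \mathcal{S} \to \mathcal{S}$ is a Skolem choice function that, on every state in $N$, returns a successor that lies in $N$, and $x_0 \in N$ is a seed state. In the deterministic case one can instead take $(N, x_0)$ with $N$ a closed recurrence set via Theorem~\ref{thm:cnt}, which only simplifies matters.

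Next I would express each component as an $\mathcal{L}$-program. The set $N$ is a finite subset of the finite set $\mathcal{S}$, so by the corollary to Lemma~\ref{lem:l-func} its characteristic function $\chi_N : \mathcal{S} \to \{0,1\}$ is computed by a finite $\mathcal{L}$-program. The Skolem function $C : \mathcal{S} \to \mathcal{S}$ is a function between finite sets, and so, again by Lemma~\ref{lem:l-func}, is computable by a finite $\mathcal{L}$-program. The seed $x_0$ is a single element of $\mathcal{S}$ and is directly expressible as a tuple of literal constants. Together these three $\mathcal{L}$-programs realise the witness of {\bf [SNT]} and thus constitute a non-termination proof in $\mathcal{L}$.

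For nested loops I would repeat the argument using the formulation in Figure~\ref{fig:nonterm-nested}, representing each recurrence set $N_i$ by the $\mathcal{L}$-program computing its characteristic function, and any required choice functions by the $\mathcal{L}$-programs given by Lemma~\ref{lem:l-func}. The main conceptual point is simply that finiteness of $\mathcal{S}$ collapses the otherwise unbounded search for $(N, C, x_0)$ into the finite table-lookup constructions already provided by Lemma~\ref{lem:l-func}; the only substantive obstacle is making sure the correct non-termination encoding is chosen for the shape of the program (open vs.\ closed recurrence set, single vs.\ nested loops), after which expressibility in $\mathcal{L}$ is immediate.
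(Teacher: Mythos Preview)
Your proposal is correct and follows essentially the same approach as the paper: identify the non-termination witness as a triple $\langle N, C, x_0 \rangle$ of a recurrence set, a Skolem choice function, and a seed state, then appeal to Lemma~\ref{lem:l-func} (and its corollary) together with finiteness of $\mathcal{S}$ to express each component as a finite $\mathcal{L}$-program. The paper's proof is terser---it does not explicitly invoke Theorems~\ref{thm:ont}/\ref{thm:snt} or discuss the deterministic and nested-loop cases---but your added detail is sound elaboration rather than a different argument.
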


\begin{proof}
 A proof of non-termination is a triple $\langle N, C, x_0 \rangle$ where
 \mbox{$N \subseteq \mathcal{S}$} is a (finite) recurrence set and
 $C : \mathcal{S} \to \mathcal{S}$ is a Skolem function choosing
 a successor for each $x \in N$.  $\mathcal{S}$ is finite, so by Lemma~\ref{lem:l-func} both
 $N$ and $C$ are computed by finite $\mathcal{L}$-programs and $x_0$ is just a ground term.
\end{proof}

\begin{theorem}
 \label{thm:generalised-sat}
 The generalised termination formula {\bf [GT]} for any loop $L$ is a tautology
 when $P_N$ and $P_T$ range over \mbox{$\mathcal{L}$-computable} functions.
\end{theorem}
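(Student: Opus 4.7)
The plan is to reduce the claim to the law of excluded middle on the termination of $L$, then invoke Theorems~\ref{thm:l-term} and~\ref{thm:l-nonterm} to produce an $\mathcal{L}$-computable witness in each branch.

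First I would fix an arbitrary loop $L$ and split on whether $L$ terminates or has an infinite execution. Showing [GT] is a tautology under the $\mathcal{L}$-restriction amounts to exhibiting, in each case, concrete $\mathcal{L}$-computable instantiations for $P_T$ and $P_N$ such that the body $\phi(P_T, x, x') \vee \psi(P_N, y)$ holds for every $x, x', y$. Since the body is a disjunction, it suffices that one disjunct hold universally; the other second-order variable may be filled in by any trivial $\mathcal{L}$-program, for instance a constant function, which is trivially $\mathcal{L}$-computable.

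In the terminating branch I would invoke Theorem~\ref{thm:l-term} to obtain a ranking function $R$ computable by a finite $\mathcal{L}$-program. By Theorem~\ref{thm:ut} (or its conditional/nested counterpart, matching whichever encoding was used to define $\phi$), this $R$ satisfies $\forall x, x'.\, \phi(R, x, x')$, so the disjunction holds everywhere once we set $P_T := R$. In the non-terminating branch I would invoke Theorem~\ref{thm:l-nonterm} to obtain a triple $\langle N, C, x_0 \rangle$ all of whose components are $\mathcal{L}$-computable; by Theorem~\ref{thm:snt} these jointly satisfy $\forall x.\, \psi(P_N, x)$, and the disjunction again holds everywhere.

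The hard part is essentially nonexistent: the theorem is a direct corollary of Theorems~\ref{thm:l-term} and~\ref{thm:l-nonterm} together with classical excluded middle on termination. The only subtlety worth flagging is that the non-termination disjunct must be written in the Skolemised form [SNT] rather than [ONT], so that it fits inside the $\exists\forall$ prefix of second-order SAT; the $\mathcal{L}$-computability of the Skolem choice function $C$ guaranteed by Theorem~\ref{thm:l-nonterm} is exactly what makes this encoding witnessable within the fragment. Once that bookkeeping is in place, the case analysis closes the proof.
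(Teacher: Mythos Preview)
Your proposal is correct and matches the paper's own proof essentially step for step: a case split on whether $L$ terminates, invoking Theorems~\ref{thm:l-term} and~\ref{thm:l-nonterm} for the $\mathcal{L}$-expressibility of the witnesses, and then the relevant soundness theorems to establish that one disjunct of {\bf [GT]} holds universally while the other second-order variable is filled arbitrarily. The only cosmetic difference is that the paper explicitly names $\phi$ as an instance of {\bf [CT]} (citing Theorem~\ref{thm:ct}) and $\psi$ as an instance of {\bf [SNT]} (citing Theorem~\ref{thm:snt}), whereas you hedge with ``or its conditional/nested counterpart''; otherwise the structure and the observation about {\bf [SNT]} versus {\bf [ONT]} are identical.
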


\begin{proof}
 For any $P, P', \sigma, \sigma$, if $P \models \sigma$ then $(P, P') \models \sigma \vee \sigma'$.

 By Theorem~\ref{thm:l-term}, if $L$ terminates then there exists a termination proof $P_T$ expressible
 in $\mathcal{L}$.  Since $\phi$ is an instance of {\bf [CT]}, $P_T \models \phi$ (Theorem~\ref{thm:ct}) and
 for any $P_N$, $(P_T, P_N) \models \phi \vee \psi$.

 Similarly if $L$ does not terminate for some input, by Theorem~\ref{thm:l-nonterm} there is a non-termination
 proof $P_N$ expressible in $\mathcal{L}$.  Formula $\psi$ is an instance of {\bf [SNT]} and so $P_N \models \psi$
 (Theorem~\ref{thm:snt}), hence for any $P_T$, $(P_T, P_N) \models \phi \vee \psi$.

 So in either case ($L$ terminates or does not), there is a witness in $\mathcal{L}$ satisfying
 $\phi \vee \psi$, which is an instance of {\bf [GT]}.
\end{proof}

\begin{theorem}
Our termination analysis %Algorithm~\ref{alg:termination} 
is sound and complete -- it terminates for all input loops $L$ with
 a correct termination verdict.
\end{theorem}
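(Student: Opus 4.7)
The plan is to compose three ingredients established earlier in the paper: the tautological status of \textbf{[GT]} proven in Theorem~\ref{thm:generalised-sat}, the completeness of the second-order SAT solver of \cite{kalashnikov} when witnesses are drawn from $\mathcal{L}$-computable functions, and the pointwise soundness of the component encodings (Theorems~\ref{thm:ct} and~\ref{thm:snt}).

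The analysis proceeds in three steps. First, I would build the generalised formula \textbf{[GT]} for the input loop $L$ and dispatch it to the solver of \cite{kalashnikov}. By Theorem~\ref{thm:generalised-sat}, \textbf{[GT]} is a tautology over $\mathcal{L}$-computable witnesses, so the solver terminates returning some pair of $\mathcal{L}$-programs $(P_T, P_N)$ satisfying $\forall x, x', y.\, \phi(P_T, x, x') \vee \psi(P_N, y)$. Second, I would evaluate $\forall x, x'.\, \phi(P_T, x, x')$ by brute enumeration; because the state space of $L$ is finite, this check terminates in a bounded number of steps. Third, I would emit the verdict: if the check succeeds, output \emph{terminating} and appeal to Theorem~\ref{thm:ct} to certify that $P_T$ is a genuine ranking function with supporting invariant; otherwise fix some concrete $(x_0, x_0')$ with $\lnot \phi(P_T, x_0, x_0')$, observe that instantiating the outer disjunction at this pair forces $\forall y.\, \psi(P_N, y)$, and output \emph{non-terminating} justified by Theorem~\ref{thm:snt}.

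Termination of the analysis then follows because each step terminates individually: the solver terminates when invoked on a satisfiable formula, and the verification check terminates because it is a finite enumeration over a bit-vector state space. Soundness follows immediately from Theorems~\ref{thm:ct} and~\ref{thm:snt}: whichever branch of the case-split fires, the corresponding theorem converts the validated witness into a genuine (non-)termination proof for $L$. Completeness, in the sense of never returning \emph{unknown}, follows from Theorem~\ref{thm:generalised-sat}, which guarantees that the solver is called on a satisfiable instance and therefore cannot fail.

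The main obstacle is the disjunctive nature of \textbf{[GT]}: the solver's witness $(P_T, P_N)$ only guarantees that \emph{at least one} disjunct holds at each pointwise instantiation, so a priori neither half is a standalone certificate, and we cannot simply read off the verdict from the return value. The key observation that resolves this is the case-split above: if $\phi(P_T, \cdot, \cdot)$ fails at even one point, then the universal quantification over $y$ in the remaining disjunct is forced to hold at every $y$, and symmetrically. Hence a single finite verification check on one of the two disjuncts unambiguously identifies which side carries the genuine proof, and the verdict it produces is sound by the corresponding theorem.
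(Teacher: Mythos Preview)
Your proof is correct and follows essentially the same route as the paper: invoke Theorem~\ref{thm:generalised-sat} to guarantee satisfiability of \textbf{[GT]}, appeal to the solver's (semi-)completeness from~\cite{kalashnikov} to obtain a witness pair $(P_T, P_N)$, then validate the witnesses and emit the verdict via Theorems~\ref{thm:ct} and~\ref{thm:snt}. The only minor differences are cosmetic: the paper checks \emph{both} putative proofs with SAT calls and discards the invalid one, whereas you check only the $\phi$-side and use the nice observation that a single failing instance of $\phi$ forces $\psi$ to hold universally---your argument is in fact slightly more explicit about \emph{why} at least one of the two witnesses must be globally valid.
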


\begin{proof}
 By Theorem~\ref{thm:generalised-sat}, the specification \lstinline!spec! is satisfiable.
 In~\cite{kalashnikov}, we show that the second-order SAT solver is semi-complete, and so is guaranteed to find a satisfying
 assignment for \lstinline!spec!. %, and so the call to {\sc Synthesise} terminates with a witness $(P_T, P_N)$.
 If $L$ terminates then $P_T$ is a termination proof (Theorem~\ref{thm:ct}),
 otherwise $P_N$ is a non-termination proof (Theorem~\ref{thm:snt}).  Exactly one of these purported proofs
 will be valid, and since we can check each proof with a single call to a SAT solver we simply
 test both and discard the one that is invalid.
\end{proof}

\section{Experiments}

To evaluate our algorithm, we implemented a tool that generates a
termination specification from a C program and calls the second-order SAT solver in \cite{kalashnikov} to obtain a proof.
We ran the resulting termination prover,
named {\sc Juggernaut}, on 47
benchmarks taken from the literature and
SV-COMP'15~\cite{svcomp15}.
We omitted exactly those SVCOMP'15
benchmarks that made use of arrays or recursion. We do not have arrays in
our logic and we had not implemented recursion in our frontend (although the
latter can be syntactically rewritten to our input format).

To provide a comparison point, we also ran {\sc ARMC}~\cite{armc-website} on
the same benchmarks.  Each tool was given a time limit of 180\,s, and was
run on an unloaded 8-core 3.07\,GHz Xeon X5667 with 50\,GB of RAM.  The
results of these experiments are given in Figure~\ref{fig:experiments}.

It should be noted that the comparison here is imperfect, since {\sc ARMC}
is solving a different problem -- it checks whether the program
under analysis would terminate if run with unbounded integer variables,
while we are checking whether the program terminates with bit-vector
variables.  This means that {\sc ARMC}'s verdict differs from ours
in 3 cases (due to the differences between integer and bit-vector
semantics).  There are a further 7 cases where our tool is able to find a
proof and {\sc ARMC} cannot, which we believe is due to our more expressive
proof language.  In 3 cases, {\sc ARMC} times out while our tool is able to
find a termination proof.  Of these, 2 cases have nested loops and the
third has an infinite number of terminating lassos.  This is not a problem for us,
but can be difficult for provers that enumerate lassos.

On the other hand, {\sc ARMC} is \emph{much} faster than our tool.  While
this difference can partly be explained by much more engineering time being
invested in {\sc ARMC}, we feel that the difference is probably inherent to
the difference in the two approaches -- our solver is more general
than {\sc ARMC}, in that it provides a complete proof system for both
termination and non-termination.  This comes at the cost of efficiency:
{\sc Juggernaut} is slow, but unstoppable.

%% The final column in Figure~\ref{fig:experiments} gives the number of
%% iterations of the CEGIS loop our synthesiser performed.  While the
%% theoretical bound for the number of CEGIS iterations is very large
%% (exponential in the amount of space used by the program), these results
%% demonstrate that in practice the number of iterations required is usually
%% small.

Of the 47 benchmarks, 2 use nonlinear operations in the program (loop6 and loop11),
and 5 have nested loops (svcomp6, svcomp12, svcomp18, svcomp40, svcomp41).
{\sc Juggernaut} handles the nonlinear cases correctly and rapidly.
It solves 4 of the 5 nested loops in less than 30\,s, but times out on the 5th.

%% Overall, we notice that the GP component was always the fastest to find
%% candidate solutions, {\sc ExplicitSearch} was always fastest
%% to find counterexample inputs, and {\sc CBMC} was always fastest at
%% proving correctness.  Each of these search methods has its strengths,
%% and removing any one would have had a significant negative impact on
%% overall performance.  Approximately 90\,\% of the total
%% runtime was spent in synthesising new candidate solutions, which matches
%% our expectations -- the (non-)termination proofs are difficult to
%% find, but easy to check.

In conclusion,
these experiments confirm our conjecture that second-order SAT can be used
effectively to prove termination and non-termination.  In particular,
for programs with nested loops, nonlinear arithmetic and complex
termination arguments, the versatility given by a general purpose solver %unrestricted termination arguments %synthesis
is very valuable.

\begin{figure}
\centering
\small
\begin{tabular}{|l|@{}c@{}||@{}c@{}|r||@{}c@{}|r|}
\hline
          &             & \multicolumn{2}{|c||}{\sc ARMC} & \multicolumn{2}{|c|}{\sc Juggernaut} \\
Benchmark & \,Expected\, & \,Verdict\, & Time & \,Verdict\, & Time \\
    \hline
    \hline
\input{table}
    \hline
\end{tabular}

Key: \tick = terminating, \xmark = non-terminating, ? = unknown (tool terminated with an inconclusive verdict).

\caption{Experimental results\label{fig:experiments}}
 \end{figure}

\section{Conclusions and Related Work}

There has been substantial prior work on automated program termination
analysis.  Figure~\ref{fig:handletable} summarises the related work with
respect to the assumptions they make about programs and ranking functions. 
Most of the techniques are specialised in the synthesis of linear ranking
functions for linear programs over integers (or
rationals)~\cite{DBLP:conf/pldi/CookPR06,DBLP:conf/cav/LeeWY12,DBLP:conf/popl/Ben-AmramG13,DBLP:conf/vmcai/P04,DBLP:conf/atva/HeizmannHLP13,DBLP:conf/cav/BradleyMS05,DBLP:conf/tacas/CookSZ13,DBLP:conf/cav/KroeningSTW10}.  Among them, Lee et
al.~make use of transition predicate abstraction, algorithmic learning, and
decision procedures~\cite{DBLP:conf/cav/LeeWY12}, Leike and Heizmann propose
linear ranking templates~\cite{DBLP:conf/tacas/LeikeH14}, whereas Bradley et
al.~compute lexicographic linear ranking functions supported by inductive
linear invariants~\cite{DBLP:conf/cav/BradleyMS05}.

While the synthesis of termination arguments for linear programs over
integers is indeed well covered in the literature, there is very limited
work for programs over machine integers.  Cook et al.~present a method
based on a reduction to Presburger arithmetic, and a template-matching
approach for predefined classes of ranking functions based on reduction to
SAT- and QBF-solving~\cite{DBLP:conf/tacas/CookKRW10}.  Similarly, the only
work we are aware of that can compute nonlinear ranking functions for
imperative loops with polynomial guards and polynomial assignments
is~\cite{DBLP:conf/vmcai/BradleyMS05}.  However, this work extends only to
polynomials.

Given the lack of research on termination of nonlinear programs, as well as
programs over bit-vectors and floats, our work focused on covering these
areas.  One of the obvious conclusions that can be reached from
Figure~\ref{fig:handletable} is that most methods tend to specialise on a
certain aspect of termination proving that they can solve efficiently. 
Conversely to this view, we aim for generality, as we do not restrict the
form of the synthesised ranking functions, nor the form of the input
programs. 

%As a general approach to proving liveness properties, Biere et al.
%propose a translation of liveness checking problems into safety checking problems
%that is however best suited for finding shallow termination arguments 
%\cite{DBLP:journals/entcs/BiereAS02}. 

As mentioned in Section~\ref{sec:intro}, approaches based on Ramsey's
theorem compute a set of local termination conditions that decrease as
execution proceeds through the loop and require expensive reachability
analyses~\cite{DBLP:conf/lpe/CodishG03,DBLP:conf/lics/PodelskiR04,DBLP:conf/pldi/CookPR06}.  In an attempt to reduce the complexity of
checking the validity of the termination argument, Cook et al.~present an
iterative termination proving procedure that searches for lexicographic
termination arguments~\cite{DBLP:conf/tacas/CookSZ13}, whereas Kroening et
al.~strengthen the termination argument such that it becomes a transitive
relation~\cite{DBLP:conf/cav/KroeningSTW10}. Following the same trend, 
we search for lexicographic nonlinear termination arguments that can be verified 
with a single call to a SAT solver.

Proving program termination implies the simultaneous search for a
termination argument and a supporting invariant.  Brock\-schmidt et
al.~share the same representation of the state of the termination proof
between the safety prover and the ranking function synthesis
tool~\cite{DBLP:conf/cav/BrockschmidtCF13}.  Bradley et al.~combine the
generation of ranking functions with the generation of invariants to form a
single constraint solving problem such that the necessary supporting
invariants for the ranking function are discovered on
demand~\cite{DBLP:conf/cav/BradleyMS05}.  In our setting, both the ranking
function and the supporting invariant are iteratively constructed in the
same refinement loop.

While program termination has been extensively studied, much less research
has been conducted in the area of proving non-termination.  Gupta et
al.~dynamically enumerate lasso-shaped candidate paths for counterexamples,
and then statically prove their
feasibility~\cite{DBLP:conf/popl/GuptaHMRX08}.  Chen et al.~prove
non-termination via reduction to safety
proving~\cite{DBLP:conf/tacas/ChenCFNO14}.  Their iterative algorithm uses
counterexamples to a fixed safety property to refine an under-approximation
of a program.  In order to prove both termination and non-termination,
Harris et al.~compose several program analyses (termination provers for
multi-path loops, non-termination provers for cycles, and global safety
provers)~\cite{DBLP:conf/sas/HarrisLNR10}.  We propose a uniform treatment
of termination and non-termination by formulating a generalised second-order
formula whose solution is a proof of one of them.

%\nocite{fisher2009phylogeny}

\bibliographystyle{splncs}
\bibliography{synth}{}

\end{document}